\newtheorem{thm}{Theorem}[section]
\newtheorem{lem}[thm]{Lemma}
\numberwithin{equation}{section}
\newcommand{\sgn}{\text{sgn}}
\begin{document}
\title[A quantum particle source]{Dynamical phase transition for a\\ quantum particle source}
\author[M. Butz and H. Spohn]{Maximilian Butz and Herbert Spohn}
\address{%
Technische Universit\"{a}t M\"{u}nchen\\
Zentrum Mathematik\\
Boltzmannstra\ss{}e 3\\
85748 Garching\\
Germany}
\email{\begin{tabbing}
butz@ma.tum.de\\
spohn@ma.tum.de
\end{tabbing}
}

\begin{abstract}
We analyze the time evolution describing a quantum source for noninteracting particles, either bosons or fermions. The growth behaviour of the particle number (trace of the density matrix) is investigated, leading to spectral criteria for sublinear or linear growth in the fermionic case, but also establishing the possibility of exponential growth for bosons. We further study the local convergence of the density matrix in the long time limit and prove the semiclassical limit.
\end{abstract}
\maketitle

\section{Introduction}
Particle sources are an indispensable part of any scattering experiment. Nevertheless in the theoretical description they are mostly disregarded on the basis that a ``suitable'' wave function has been prepared. Of course, a fully realistic modelling of a particle source will be difficult and possibly of marginal interest. However on intermediate grounds, having a simple model source could be of use. The purpose of our paper is to study a, in a certain sense, minimal model. Surprisingly enough, at least to us, we will find that for bosons there is a dynamical phase transition.\par
On the classical level a particle source is easily modeled and, in variation, used widely without further questioning. To explain the principle let us discuss particles in one dimension with position $x_j\in\mathbb{R}$ and velocity $v_j\in\mathbb{R}$, $j=1,2,...$. The source is located at the origin and turned on at time $t=0$. Particles are created at times $0<t_1<t_2<...$ . Once the $j$-th particle is created at time $t_j$, it moves freely as $x_j(t)=v_j(t-t_j)$, $t\geq t_j$. To have a Markov process it is assumed that $(t_{j+1}-t_j)$ are independent and exponentially distributed with rate $\lambda$. Also at the moment of creation the velocity is distributed according to $h(v)\mathrm{d}v$ independently of all the other particles. The average density $f$ on the one-particle phase space is then governed by
\begin{equation}
\label{classsource}
 \frac{\partial}{\partial t}f(x,v,t)+v\frac{\partial}{\partial x}f(x,v,t)=\lambda\delta(x)h(v).
\end{equation}
Of course, one could imagine some statistical dependence. But then the simplicity of equation (\ref{classsource}) is lost. In applications, the left hand side of the transport equation may contain further items as an external potential, a nonlinear collision operator, and the like, to which the source term on the right hand side is simply added. The underlying reasoning for the source term still follows from the statistical assumptions stated above.\\
\indent{}On the basis of (\ref{classsource}) one concludes that, provided $f(x,v,0)=0$, the particle number increases linearly in time,
\begin{equation}
 \int_{\mathbb{R}^2}f(x,v,t)\mathrm{d}x\hspace{0.5mm}\mathrm{d}v=t\lambda\int_{\mathbb{R}}h(v)\mathrm{d}v.
\end{equation}
One can also show, that $f(\cdot,t)$ converges to a steady state as $t\rightarrow\infty$.\par
In the quantum case the Markov process on the many particle level is replaced by a quantum dynamical semigroup on Fock space. As in (\ref{classsource}), the crucial constraint comes from the condition to have a closed equation on the one-particle level, a condition which essentially determines the model uniquely. Particles are created in the pure state $\phi\in\mathcal{H}=L^2\left(\mathbb{R}^d\right)$. Once created they move in $\mathbb{R}^d$ according to the Schr\"{o}dinger equation
\begin{equation}
 i\frac{d}{dt}\psi=H\psi.
\end{equation}
The one-particle Hamiltonian $H$ is a self-adjoint operator with domain $\mathcal{D}(H)\subset\mathcal{H}$. For later purpose, $P_\mathrm{pp}$, resp. $P_\mathrm{ac}$, is the spectral projection of $H$ onto the pure point, resp. absolutely continuous, part of the spectrum. Primarily, we think of the free Schr\"{o}dinger evolution, in which case $H=-\Delta$ (we use units such that the Planck constant $\hbar$ equals $1$ and the particle mass $m$ equals $1/2$). But some of our results also hold abstractly, in particular for the Schr\"{o}dinger operator with a potential, $H=-\Delta+V(x)$. We use the shorthand $H_0=-\Delta$. Our techniques generalize without great efforts to the case of particles being created in a mixed state. Hence we decided to stay with a pure state as minimal model.\\
\indent{}To be more precise, we introduce the Fock space
\begin{equation}
 \mathcal{F}_\pm=\bigoplus_{n=0}^{\infty}S_{\pm}\mathcal{H}^{\otimes n},
\end{equation}
where $S_{\pm}\mathcal{H}^{\otimes n}$ denotes the (anti)symmetrized $n$-fold tensor product of $\mathcal{H}$ with itself, i.e. the either bosonic ($+$) or fermionic ($-$) $n$-particle subspace. The bosonic or fermionic creation and annihilation operators are denoted by $a_\pm(\phi)^*$ and $a_\pm(\phi)$. They satisfy the (anti)commutation relations
\begin{equation}
\label{ccr}
\begin{split}
 [a_\pm(f), a^*_\pm(g)]_\mp&=\left\langle f,g\right\rangle \operatorname{\mathbf{1}}_{\mathcal{F}_\pm},\\
[a_\pm(f), a_\pm(g)]_\mp&=0=[a^*_\pm(f), a^*_\pm(g)]_\mp
\end{split}
\end{equation}
with $\left\langle \cdot,\cdot \right\rangle $ the scalar product of $\mathcal{H}$ and $[A,B]_-=[A,B]=AB-BA$, $[A,B]_+=\{A,B\}=AB+BA$ for operators $A, B$ on $\mathcal{F}_\pm$. We postulate an evolution equation of Lindblad type
\begin{equation}
 \frac{d}{dt}A(t)=\mathcal{L}A(t)
\end{equation}
with bounded $A(t)$ on $\mathcal{F}_\pm$, where $\mathcal{L}$ is the generator of a completely positive dynamical semigroup. To have a closed equation on the one-particle space, $\mathcal{L}$ has to be quadratic in $a_\pm(\phi)^*$, $a_\pm(\phi)$, in other words $\mathcal{L}$ has to generate a quasifree dynamical semigroup. Clearly $\mathcal{L}$ is the sum of the hamiltonian part $\mathcal{L}_0$ and the source part $\mathcal{L}_\mathrm{s}$. $\mathcal{L}_0$ is obviously quasifree and completely positive. If particles are created in the state $\phi\in\mathcal{H}$, $\|\phi\|=1$ with rate $|\lambda|\geq0$, then the only possible source term has to be of the form
\begin{equation}
\label{sourceterm}
 \mathcal{L}_\mathrm{s}A=|\lambda|\big(2a_\pm(\phi)Aa_\pm(\phi)^*-a_\pm(\phi)a_\pm(\phi)^*A-Aa_\pm(\phi)a_\pm(\phi)^* \big)
\end{equation}
Let $\omega_0$  be the initial state as density matrix  on $\mathcal{F}_{\pm}$. Then $\omega_t(A)=\omega_0(e^
{\mathcal{L}t}A)$ defines the state at time $t$. Its one-particle density matrix is given by
\begin{equation}\label{one-point}
\langle g,\rho(t)f \rangle = \omega_t(a^*(f)a(g)).
\end{equation}
Here $\rho(t)^*=\rho(t)$, $\rho(t)\geq0$, $\operatorname{tr}\left(\rho(t) \right)<\infty$ and $\rho(t)\leq\mathbf{1}$ in the case of fermions. If $\omega_0$ is quasifree, then $\omega_t$ is also quasifree. In particular, $\omega_t$
is uniquely determined by $\rho(t)$. 

From (\ref{sourceterm}), (\ref{one-point}) one readily obtains the evolution equation for the one-particle density matrix $\rho(t)$,
\begin{equation}
 \label{timeevolb}
\frac{d}{dt}\rho(t)=-i[H,\rho(t)]+2|\lambda| P_\phi+\lambda\big(P_\phi\rho(t)+\rho(t)P_\phi\big),
\end{equation}
 where $P_\phi=|\phi\left\rangle \right\langle \phi|$ is the orthogonal projection onto $\phi$. We regard $\lambda$ as real parameter, $\lambda\in\mathbb{R}$. Then $\lambda>0$ in (\ref{timeevolb}) is the evolution equation for bosons, while $\lambda<0$ refers to fermions, and we maintain this convention throughout. 
(\ref{timeevolb}) holds for $t\geq0$ and is supplemented by the initial $\rho(0)=\rho_0$. It is the quantum analogue of the classical equation (\ref{classsource}).\\
\indent{}The goal of our paper is a detailed analyis of Equation (\ref{timeevolb}). 

While precise conditions will be given in the main part, let us explain already now the rough overall picture which emerges from our study. We take $H = H_0 = - \Delta$, a  ``reasonable" wave function
$\phi$, and start with $\rho(0) = 0$,  which corresponds to the Fock vaccuum. Hence $\omega_t$
is quasifree and gauge invariant. Basically there is a competition between the speed of transport through $-\Delta$ and the rate $|\lambda|$ at which new particles are supplied. For small $|\lambda|$
the Laplacian dominates. At time $t$ the front particles have travelled a distance of order $t$ from the origin. In essence, the state $\omega_t$ is an incoherent mixture of single particle wave functions,  similar to the classical set-up described by (1.1). If $\lambda \to -\infty$, \textit{i.e.} per unit time a large number of fermions are created, then  the replenishment becomes constrained because of the exclusion principle and, while the number of particles still increases linearly in time, the particle current should level off. In fact, we will show that the current vanishes in the limit
of large production rate.

On the other side for bosons, beyond some critical value $\lambda_\mathrm{c}$, the operator $iH_0 +\lambda P_\phi$ attains an isolated, nondegenerate eigenvalue satisfying $(iH_0 + \lambda P_\phi)\phi_\lambda
= \alpha(\lambda)\phi_\lambda$, $\phi_\lambda \in \mathcal{H}$, with $ \operatorname{Re} \alpha(\lambda)> 0$. Therefore in
the long time limit
\begin{equation}\label{longtime}
\rho(t) \cong P_{\phi_\lambda}\exp[2\operatorname{Re} \alpha(\lambda)t].
\end{equation}
The particle number increases exponentially and a pure Bose condensate  with condensate wave function $\phi_\lambda$ is generated. $\phi_\lambda\to\phi$ as $\lambda \to \infty$. We will
prove this scenario for sufficiently large $\lambda$. The critical regime, which in principle could be more complicated, remains to be explored.

Quasifree dynamical semigroups were introduced in \cite{davies,damoen}. A very readable account is the review by Alicki in \cite{alicki}. He writes down Equations  (\ref{sourceterm}) and  (\ref{timeevolb}) for the case of a general source and sink. He also discusses coupled quantum fields, for which such kind of equations arise in a weak coupling limit. The case of a sink only is considered in \cite{fannes}. A discrete time model is studied, which however in a continuous time limit converges to a quasifree dynamical semigroup of the type (\ref{sourceterm}) with creation and annihilation operators interchanged. We also refer to \cite{attal}, where more recent mathematical contributions are listed. \\
\indent{}To give a brief outline: In the following two sections we properly define the solution to (\ref{timeevolb}) and list the main results. In Sections \ref{sectioninfgrowth} to \ref{sectionexp} we study the particle number, $N(t)=\operatorname{tr}\left(\rho(t) \right)$, when starting with an empty space $\rho_0=0$. In particular we establish both asymptotically linear and exponential growth depending on the parameters. In Section \ref{sectionrestrict} the convergence of the local density matrix as $t\rightarrow\infty$ is investigated. Finally we prove that in the semiclassical limit, $|\lambda|=\mathcal{O}(\epsilon)$ and time, space $\mathcal{O}(\epsilon^{-1})$ the solution of equation (\ref{timeevolb}) converges to the solution of the classical source equation (\ref{classsource}) with $h(v)\sim|\hat\phi(v)|^2$.

\section{Existence of solutions}
The formal solution of equation (\ref{timeevolb}) for $t\geq0$ reads
\begin{equation}
 \label{formal}
\rho(t)=\sgn(\lambda)\left(\mathrm{e}^{(-iH+\lambda P_\phi)t}\mathrm{e}^{(iH+\lambda P_\phi)t}-\mathbf{1}\right)+\mathrm{e}^{(-iH+\lambda P_\phi)t}\rho_0\mathrm{e}^{(iH+\lambda P_\phi)t},
\end{equation}
where $\mathbf{1}=\mathbf{1}_\mathcal{H}$ is the identity map on $\mathcal{H}$.
Since $H$ is selfadjoint, $iH$ generates a strongly continuous unitary group on $\mathcal{H}$ and, considering $\pm\lambda P_\phi$ as a bounded perturbation of this generator, we can apply Theorem 2.1 of \cite{kato}, Chapter IX., \S2 to conclude that $T=iH\pm\lambda P_\phi:\mathcal{D}(H)\rightarrow\mathcal{H}$ still is the generator of a strongly continuous group which is norm-bounded as $\Vert \mathrm{e}^{tT}\Vert\leq \mathrm{e}^{\vert\lambda t\vert}$. Therefore, the (semi)groups occuring in (\ref{formal}) are well-defined. Furthermore, the well-known formula
\begin{equation}
 \mathrm{e}^{(\pm iH+\lambda P_\phi)t}=\mathrm{s}-\lim_{n\rightarrow\infty}\left(\mathbf{1}-\frac{(\pm iH+\lambda P_\phi)t}{n} \right)^{-n}
\end{equation}
implies
\begin{equation}
\label{adj}
 \mathrm{e}^{(-iH+\lambda P_\phi)t}=\left( \mathrm{e}^{(iH+\lambda P_\phi)t}\right) ^*
\end{equation}
for all $t\in\mathbb{R}$.\\ 
\indent{}Next we have to ensure that $\rho(t)$ is actually a density matrix, i.e. a positive trace class operator for all $t\geq0$. From (\ref{formal}) one obtains
\begin{equation}
 \rho(t)=2|\lambda|\int_0^t \mathrm{e}^{(-iH+\lambda P_\phi)s} P_\phi \mathrm{e}^{(iH+\lambda P_\phi)s} ds+\mathrm{e}^{(-iH+\lambda P_\phi)t}\rho_0\mathrm{e}^{(iH+\lambda P_\phi)t}
\end{equation}
in which the integral is a trace-class valued Riemann integral (recall that the semigroups are strongly continuous). Since $\rho_0$ and $P_\phi$ are positive trace class operators, (\ref{adj}) implies the same property for $\rho(t)$. In case $\lambda<0$, we have to check the fermionic property $\rho(t)\leq\mathbf{1}$. But if the system starts in a fermionic state, $\rho_0\leq\mathbf{1}$, (\ref{formal}) can be reordered to 
\begin{equation}
 \mathbf{1}-\rho(t)=\mathrm{e}^{(-iH+\lambda P_\phi)t}\left(\mathbf{1} -\rho_0\right) \mathrm{e}^{(iH+\lambda P_\phi)t},
\end{equation}
which stays positive for all $t$. Thus the fermionic property of $\rho_0$ is preserved in time.\

\section{Main results}
\subsection{Asymptotics of the particle number}
Most basically one would like to know the number of particles, $N(t)$, produced by the source. Na\"{\i}vely one would expect $N(t)$ to grow linearly. However, the statistics of the particles induces an effective attraction, respectivly repulsion, which might change such simplistic picture. From (\ref{formal}) one easily calculates
\begin{equation}
\label{trg}
 \frac{d}{dt}\operatorname{tr}(\rho(t))= 2|\lambda|\Vert \mathrm{e}^{(-iH+\lambda P_\phi)t} \phi\Vert^2+2\lambda\left\langle\phi,\mathrm{e}^{(-iH+\lambda P_\phi)t}\rho_0\mathrm{e}^{(iH+\lambda P_\phi)t}\phi\right\rangle.
\end{equation}
Together with the observation that for all $\psi\in\mathcal{H}$, 
\begin{equation}
\label{mono}
 \frac{d}{dt}\Vert \mathrm{e}^{(-iH+\lambda P_\phi)t} \psi\Vert^2=2\lambda\left|\left\langle \phi,\mathrm{e}^{(-iH+\lambda P_\phi)t} \psi\right\rangle \right|^2,
\end{equation}
equation (\ref{trg}) immediately implies that for any Hamiltonian $H$, the number of particles grows at least linearly in time for $\lambda>0$, and at most linearly for $\lambda<0$. The number of fermions is monotonically increasing, which is an easy consequence of (\ref{trg}) together with the fermionic property $\rho_0\leq \mathbf{1}$ and the fact that $\Vert \mathrm{e}^{(-iH+\lambda P_\phi)t} \phi\Vert^2=\Vert \mathrm{e}^{(+iH+\lambda P_\phi)t} \phi\Vert^2$ (their derivatives with respect to $t$ are equal).\\
\indent{}Concerning the asymptotic growth of $N(t)$, the initial density $\rho_0$ does not change the qualitative behaviour, and is therefore set to $\rho_0=0$. In addition to the upper bound $N(t)\leq2|\lambda| t$ for the number of fermions,  there is an easy characterization of those source states for which the number of fermions stays bounded.
\begin{thm}
\label{infgrowth}
 Let $\lambda<0$. $N(t)$ stays bounded as $t\rightarrow\infty$ if and only if the source state $\phi$ is a finite linear combination of eigenvectors of $H$. In this case
\begin{equation}
 \lim_{t\rightarrow\infty}N(t)=N.
\end{equation}
with $N$ the number of different eigenvalues corresponding to these eigenvectors.

\end{thm}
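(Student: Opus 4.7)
My plan is to compute the strong limit of $\rho(t)$ and identify it as the orthogonal projection $P_V$ onto the cyclic subspace $V := \overline{\operatorname{span}}\{\mathrm{e}^{-iHt}\phi : t\in\mathbb R\}$. From (\ref{formal}) with $\rho_0=0$ one has the integrated form
\[
\rho(t) = 2|\lambda|\int_0^t \mathrm{e}^{Ms}P_\phi \mathrm{e}^{M^*s}\,ds, \qquad M := -iH + \lambda P_\phi,
\]
whose integrand is a positive rank-one operator. Therefore $\rho(t)$ is monotone increasing in the operator order; combined with the fermionic bound $\rho(t)\le\mathbf{1}$, this produces a strong limit $\rho_\infty$ with $0\le\rho_\infty\le\mathbf{1}$ and $\operatorname{tr}\rho(t)\uparrow\operatorname{tr}\rho_\infty =: N^*\in[0,\infty]$.

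The heart of the argument is the invariance identity
\[
\mathrm{e}^{Mt}(\mathbf{1} - \rho_\infty)\mathrm{e}^{M^*t} = \mathbf{1} - \rho_\infty, \qquad t\in\mathbb R,
\]
obtained by combining the change-of-variables identity $\mathrm{e}^{Mt}\rho_\infty \mathrm{e}^{M^*t} = \rho_\infty - \rho(t)$ with the identity $\mathrm{e}^{Mt}\mathrm{e}^{M^*t} = \mathbf{1} - \rho(t)$ (which is (\ref{formal}) reordered in the fermionic case $\sgn\lambda=-1$, $\rho_0=0$). Once I know $\rho_\infty\phi = \phi$, applying this invariance to $\phi$ and using that $\mathrm{e}^{Mt}$ is an invertible group (by the Kato bound cited after (\ref{adj})) forces $\mathrm{e}^{M^*s}\phi\in\operatorname{Fix}(\rho_\infty)$ for every $s\in\mathbb R$. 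Both $iH$ and the rank-one $\lambda P_\phi$ leave $V$ invariant, so $\mathrm{e}^{M^*s}\phi\in V$; conversely the variation-of-constants relation
\[
\mathrm{e}^{iHt}\phi = \mathrm{e}^{M^*t}\phi - \lambda\int_0^t \mathrm{e}^{iH(t-s)}\langle\phi,\mathrm{e}^{M^*s}\phi\rangle\phi\,ds
\]
is a Volterra equation that inverts to express $\mathrm{e}^{iHt}\phi$ in terms of the orbit $\{\mathrm{e}^{M^*s}\phi\}$. Hence $\overline{\operatorname{span}}\{\mathrm{e}^{M^*s}\phi\}=V$ and $\rho_\infty|_V = \mathbf{1}_V$. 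On $V^\perp$ the operator $M$ reduces to the anti-selfadjoint $-iH$, so the integrand vanishes and $\rho_\infty|_{V^\perp}=0$; therefore $\rho_\infty = P_V$ and $N^* = \dim V$.

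To verify $\rho_\infty\phi = \phi$ I use the identity $\dot N(t) = 2|\lambda|(1 - \langle\phi,\rho(t)\phi\rangle) = 2|\lambda|\Vert\mathrm{e}^{Mt}\phi\Vert^2$ (read off from (\ref{trg}) with $\rho_0=0$) together with (\ref{mono}), which makes $\Vert\mathrm{e}^{Mt}\phi\Vert^2$ non-increasing for $\lambda<0$. In the necessity direction, $N(t)$ bounded renders this monotone quantity integrable; it therefore tends to zero, giving $\langle\phi,\rho_\infty\phi\rangle=1$ and then $\rho_\infty\phi=\phi$ from $\rho_\infty\le\mathbf{1}$. In the sufficiency direction, $\phi=\sum_{j=1}^k c_j\phi_j$ with $k$ distinct eigenvalues makes $V$ the $k$-dimensional span of the $\phi_j$, and the fermionic bound restricted to $V$ gives the a priori estimate $N(t)\le\dim V=k<\infty$, to which the same argument applies.

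Collecting: $N^* = \dim V$ in both directions, and $\dim V<\infty$ is equivalent to $\phi$ being a finite linear combination of eigenvectors of $H$, with $\dim V$ equal to the number of distinct eigenvalues involved. The most delicate step is the identification $\overline{\operatorname{span}}\{\mathrm{e}^{M^*s}\phi\}=V$: the reverse inclusion rests on inverting the Volterra equation displayed above, which is tractable because $\langle\phi,\mathrm{e}^{M^*s}\phi\rangle$ is a scalar kernel and the perturbation $\lambda P_\phi$ has rank one.
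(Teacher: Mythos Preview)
Your argument is correct and runs parallel to the paper's, but with a genuinely different organizing idea. Both proofs establish the invariance $e^{Mt}(\mathbf{1}-\rho_\infty)e^{M^*t}=\mathbf{1}-\rho_\infty$ and the key fact $\rho_\infty\phi=\phi$ in the same way. They diverge in how they exploit this: the paper differentiates $e^{iHt}\rho_\infty e^{-iHt}$ and uses $(\mathbf{1}-\rho_\infty)\phi=0$ to show directly that $\rho_\infty$ commutes with the unitary group, then reads off that the (finite-dimensional, since $\rho_\infty$ is trace class) eigenspace for the eigenvalue $1$ is $H$-invariant and contains $\phi$. You instead fix the cyclic subspace $V$ from the outset, identify $\overline{\operatorname{span}}\{e^{M^*s}\phi\}=V$ via Duhamel, and conclude $\rho_\infty=P_V$ outright, which gives a unified treatment of both implications through the single identity $N^*=\dim V$. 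The paper's route is marginally more direct; yours yields a sharper structural statement about the limiting density matrix.

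Two small remarks. First, the step you flag as ``most delicate'' is easier than you suggest: the mirror form of Duhamel,
\[
e^{iHt}\phi = e^{M^*t}\phi - \lambda\int_0^t \langle\phi, e^{iHs}\phi\rangle\, e^{M^*(t-s)}\phi\,ds,
\]
already has $e^{M^*(t-s)}\phi$ (not $e^{iH(t-s)}\phi$) under the integral, so $e^{iHt}\phi\in\overline{\operatorname{span}}\{e^{M^*s}\phi\}$ follows immediately without inverting any Volterra equation. Second, when you write that $iH$ leaves $V$ invariant, what you actually need (and use) is that $e^{iHt}$ does; since $V$ is the closed $e^{-iHt}$-orbit of $\phi$ this is automatic, and the Dyson series then gives $e^{M^*s}V\subset V$.
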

\noindent{}Concerning the distinction of source states generating linear or sublinear growth of $N(t)$ in the fermionic case we have the following result.
\begin{thm}
 \label{linsublin}
Let $\lambda<0$. If $P_{\mathrm{pp}}\phi=\phi$, then the limit rate of particle production vanishes and $N(t)$ grows sublinearly. On the other hand, if $P_\mathrm{ac}\phi\neq0$, then $N(t)$ increases linearly. 
\end{thm}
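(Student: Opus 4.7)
The plan is to reduce both claims to determining the monotone limit $L := \lim_{t\to\infty}\|\psi(t)\|^2$, where $\psi(t) := e^{tT}\phi$ and $T = -iH + \lambda P_\phi$. Integrating (\ref{trg}) with $\rho_0 = 0$ gives $N(t) = 2|\lambda|\int_0^t\|\psi(s)\|^2\,ds$, and (\ref{mono}) shows $\|\psi(s)\|^2$ is monotonically non-increasing for $\lambda < 0$, so $L \in [0,1]$ exists and Ces\`aro yields $N(t)/t \to 2|\lambda|L$. Hence linear growth corresponds to $L > 0$, sublinear to $L = 0$, and the theorem reduces to showing $L > 0 \iff P_{\mathrm{ac}}\phi \neq 0$. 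Since $\psi(t)$ lies in the cyclic subspace of $\phi$ under $H$, I would pass via the spectral theorem to the representation $L^2(\mathbb{R}, d\mu_\phi)$ with $H$ multiplication by $E$ and $\phi \cong \mathbf{1}$. Duhamel in the interaction picture yields $e^{iEt}\psi(t, E) = 1 + \lambda F(t, E)$ with $F(t, E) = \int_0^t e^{iEs}f(s)\,ds$ and $f(s) = \langle\phi, \psi(s)\rangle$, so
\begin{equation*}
\|\psi(t)\|^2 = \int|1 + \lambda F(t, E)|^2\,d\mu_\phi(E), \qquad \|f\|^2_{L^2(\mathbb{R}_+)} = \frac{1-L}{2|\lambda|},
\end{equation*}
hence $f \in L^2$. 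The Volterra equation $f = g + \lambda g*f$ (with $g(t) = \int e^{-iEt}\,d\mu_\phi$) Laplace-transforms to $\hat f(z) = \hat g(z)/(1 - \lambda\hat g(z))$ for $\mathrm{Re}\,z > 0$, where $\hat g(z) = \int d\mu_\phi(E)/(z + iE)$.

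For the linear case $P_{\mathrm{ac}}\phi \neq 0$, on the AC support $\{\rho > 0\}$ (positive Lebesgue measure, $\rho$ the AC density of $\mu_\phi$) the boundary value $\hat g(-iE + 0^+)$ has real part $\pi\rho(E) > 0$, so $|1 - \lambda\hat g(-iE + 0^+)|^2 \geq 1$ and the algebraic identity $1 + \lambda\hat f(-iE+0^+) = 1/(1 - \lambda\hat g(-iE + 0^+))$ is nonzero a.e.\ there. Since $f \in L^2$, $F(t_k, E) \to \hat f(-iE + 0^+)$ for a.e.\ $E$ along some subsequence $t_k \to \infty$, and Fatou applied to the AC part of $\mu_\phi$ gives
\begin{equation*}
L \geq \int_{\{\rho > 0\}}\frac{\rho(E)\,dE}{|1 - \lambda\hat g(-iE + 0^+)|^2} > 0,
\end{equation*}
yielding linear growth.

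For the sublinear case $P_{\mathrm{pp}}\phi = \phi$, $\mu_\phi = \sum_n p_n\delta_{E_n}$ is atomic and $\hat g$ has simple poles at each $z = -iE_n$, so $1 + \lambda\hat f(-iE_n + 0^+) = 1/(1 - \lambda\hat g(-iE_n + 0^+)) = 0$ by removable singularity at every atom. The boundary value $\hat f(0 + i\nu) = -iJ(\nu)/(1 + i\lambda J(\nu))$ holds off the (Lebesgue-null) atoms, with $J(\nu) = \mathrm{p.v.}\sum_n p_n/(\nu + E_n)$, so Plancherel reduces the claim $L = 0$ to the integral identity $\int_{\mathbb{R}}J(\nu)^2/(1 + \lambda^2 J(\nu)^2)\,d\nu = \pi/|\lambda|$. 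I would establish this via the auxiliary function $h(\zeta) := i\lambda F(\zeta)/(1 + i\lambda F(\zeta))$, with $F(\zeta) := \int d\mu_\phi(E)/(E - \zeta)$ the Herglotz transform of $\mu_\phi$: $h$ is bounded analytic in the upper half-plane (since $\mathrm{Re}(1 + i\lambda F) = 1 + |\lambda|\mathrm{Im}\,F \geq 1$ there), its boundary $\mathrm{Re}\,h(\mu+i0) = \lambda^2 J^2/(1+\lambda^2 J^2)$ lies in $L^1(\mathbb{R})$, and the asymptotic $h(iy) = |\lambda|/y + O(1/y^2)$ (read off from $F(iy) = i/y + O(1/y^2)$) combined via the Poisson representation and dominated convergence as $y\to\infty$ yields $\int\mathrm{Re}\,h\,d\mu = \pi|\lambda|$, equivalent to the target identity. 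The main obstacle is the sublinear case's analytic setup: one must correctly identify the Herglotz representation and track boundary behavior of $\hat f$ across the atoms of $\hat g$ (where $\hat g$ has simple poles on the imaginary axis) before the Poisson-matching argument can be cleanly executed.
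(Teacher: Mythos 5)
Your proposal is correct in outline, but it takes a genuinely different route from the paper. The paper argues operator-theoretically: for $\lambda<0$ the family $\mathrm{e}^{(iH+\lambda P_\phi)t}\mathrm{e}^{(-iH+\lambda P_\phi)t}$ is positive and decreasing, hence converges weakly to an operator $A$ invariant under the semigroup; for $P_{\mathrm{pp}}\phi=\phi$ one expands $\phi$ in eigenvectors $\psi_n$ of $H$ and differentiates $\langle\psi_n,\mathrm{e}^{-iHt}A\mathrm{e}^{iHt}\psi_n\rangle$ to force $\langle\phi,A\phi\rangle=0$, i.e. $L=0$; for $P_{\mathrm{ac}}\phi\neq0$ one assumes $L=0$ for contradiction, deduces that $A$ commutes with $H$, truncates the a.c.\ spectral density at level $c$ to get $\psi_c$ with $\langle\phi,\mathrm{e}^{-iHs}\psi_c\rangle\in L^2(\mathbb{R}_s)$ and $A\psi_c=0$, and then a Duhamel plus Cauchy--Schwarz estimate yields $\lim_t\Vert\mathrm{e}^{(-iH+\lambda P_\phi)t}\psi_c\Vert\geq\Vert\psi_c\Vert>0$, a contradiction. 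You instead work in the spectral representation with boundary values of the Laplace transform: Fatou combined with the Paley--Wiener identification of $\hat f(-iE+0^+)$ gives the quantitative lower bound $L\geq\int_{\{\rho>0\}}\rho\,|1-\lambda\hat g|^{-2}\mathrm{d}E>0$ in the a.c.\ case, and the Plancherel/Poisson computation with the bounded Carath\'eodory function $h$ gives $\Vert f\Vert_{L^2}^2=1/(2|\lambda|)$, hence $L=0$, in the pure point case. Your route requires heavier machinery (a.e.\ nontangential boundary values of Cauchy transforms, $H^2$ theory, the Poisson representation of bounded positive harmonic functions in a half-plane), but it buys strictly more: the argument only uses that $\mu_\phi$ is purely singular, so it also settles the singular continuous case left open by the theorem, and it produces an explicit bound on the limiting production rate. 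One caveat: the ``removable singularity at every atom'' remark should not be load-bearing --- vanishing of the boundary value of $\hat f$ at the atoms does not by itself control $F(t,E_n)$ as $t\to\infty$ --- but your actual proof of $L=0$ is the identity $\int J^2(1+\lambda^2J^2)^{-1}\mathrm{d}\nu=\pi/|\lambda|$, which your Poisson-matching argument does establish (and which checks out on a single atom, where $f(t)=\mathrm{e}^{\lambda t}$).
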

In particular, for $H=H_0$, and arbitrary $\phi$, the number of fermions increases linearly in time.\\
\indent{}For bosons, linear growth is not an upper, but a lower bound for $N(t)$. But also in this case one can characterize explicitely a class of source states which yield linear growth for small $\lambda$. 
\begin{thm}
\label{lingrowth}
 Let $\phi\in\mathcal{H}$, $\|\phi\|=1$ and assume that $\tau:=\int_0^\infty|\left\langle\phi,\mathrm{e}^{-iHt}\phi \right\rangle|\mathrm{d}t<\infty$. Then for all $\lambda<0$ or $0<\lambda<\tau^{-1}$,  $\frac{d}{dt}N(t)$ converges to a non-zero limit as $t\rightarrow\infty$, which satisfies
\begin{equation}
\label{growthrates}
\begin{split}
 2|\lambda|&\leq\lim_{t\rightarrow\infty}\frac{d}{dt}N(t)\leq\frac{2|\lambda|}{\left( {1-\lambda\tau}\right) ^2}\hspace{5mm}(\lambda>0),\\
\frac{2|\lambda|}{\left( {1-\lambda\tau}\right) ^2}&\leq\lim_{t\rightarrow\infty}\frac{d}{dt}N(t)\leq2|\lambda|\hspace{5mm}(\lambda<0).
\end{split}
\end{equation}
\end{thm}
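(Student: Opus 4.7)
By \eqref{trg} with $\rho_0 = 0$, $\frac{d}{dt}N(t) = 2|\lambda|\,\|\psi(t)\|^2$, where $\psi(t) := e^{(-iH+\lambda P_\phi)t}\phi$. Duhamel's formula gives $\psi(t) = e^{-iHt}\phi + \lambda\int_0^t e^{-iH(t-s)}\phi\,u(s)\,ds$ with $u(s) := \langle\phi,\psi(s)\rangle$; projecting on $\phi$ reduces matters to the scalar Volterra equation $u = g + \lambda(g*u)$, where $g(t) := \langle\phi,e^{-iHt}\phi\rangle = \int e^{-i\nu t}\,d\mu(\nu)$ is the characteristic function of the spectral measure $\mu$ of $\phi$ for $H$. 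In particular $|g|\leq 1$, and the hypothesis puts $g_+ := g\mathbf{1}_{[0,\infty)}$ in $L^1\cap L^\infty$.

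Let $V$ denote convolution by $g_+$; then $u = (I-\lambda V)^{-1}g_+$ whenever the inverse exists. For $0<\lambda<\tau^{-1}$ the Neumann series converges in $L^1$ and yields $\|u\|_{L^1}\leq \tau/(1-\lambda\tau)$. For $\lambda<0$ one exploits the representation $\tilde g(z) = \int(z+i\nu)^{-1}\,d\mu(\nu)$, valid for $\operatorname{Re}z>0$ and extending continuously to $\operatorname{Re}z\geq 0$ since $g_+\in L^1$; it satisfies $\operatorname{Re}\tilde g\geq 0$ and $|\tilde g|\leq\tau$. Consequently $1-\lambda\tilde g(i\xi) = 1+|\lambda|\tilde g(i\xi)$ has real part $\geq 1$ and is bounded away from zero on $\mathbb{R}\cup\{\infty\}$; a Paley-Wiener / Wiener-algebra argument then produces a bounded inverse for $I-\lambda V$ on $L^1$, so $u\in L^1(\mathbb{R}_+)$ in either case.

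The crucial step is a spectral-representation calculation. Restricting to the cyclic subspace of $\phi$ and identifying $\mathcal{H}_\phi\simeq L^2(d\mu)$ with $\phi\equiv 1$, the Cauchy problem $\partial_t\psi(t,\nu) = -i\nu\psi(t,\nu)+\lambda u(t)$ integrates to
\begin{equation*}
\psi(t,\nu) = e^{-i\nu t}\Bigl(1+\lambda\int_0^t e^{i\nu s}u(s)\,ds\Bigr).
\end{equation*}
Since $u\in L^1$, the inner integral converges uniformly in $\nu$ to $\tilde u(-i\nu)$, and the Volterra relation rewrites $1+\lambda\tilde u(-i\nu) = 1/(1-\lambda\tilde g(-i\nu))$. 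The uniform bound $|\psi(t,\nu)|\leq 1+|\lambda|\|u\|_{L^1}$ and dominated convergence then yield
\begin{equation*}
L := \lim_{t\to\infty}\|\psi(t)\|^2 = \int_{\mathbb{R}}\frac{d\mu(\nu)}{|1-\lambda\tilde g(-i\nu)|^2}.
\end{equation*}

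All four bounds in \eqref{growthrates} follow from this identity together with $|\tilde g(-i\nu)|\leq\tau$, $\operatorname{Re}\tilde g(-i\nu)\geq 0$, and $\mu(\mathbb{R})=1$. For $0<\lambda<\tau^{-1}$: $|1-\lambda\tilde g|\geq 1-\lambda\tau$ gives $L\leq(1-\lambda\tau)^{-2}$, while monotonicity of $\|\psi(t)\|^2$ from \eqref{mono} gives $L\geq 1$. For $\lambda<0$: $|1-\lambda\tilde g|\leq 1+|\lambda|\tau$ gives $L\geq(1+|\lambda|\tau)^{-2}=(1-\lambda\tau)^{-2}$, while dissipativity $\operatorname{Re}\langle\psi,(-iH+\lambda P_\phi)\psi\rangle = \lambda|\langle\phi,\psi\rangle|^2\leq 0$ and the resulting contractivity of the semigroup give $L\leq 1$. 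Multiplication by $2|\lambda|$ delivers \eqref{growthrates} and shows the limit is nonzero. The main technical obstacle is the $L^1$ bound on $u$ for $\lambda<0$ with $|\lambda|\tau\geq 1$, where the Neumann series diverges and one must invoke the Wiener-algebra machinery sketched above; once that is in hand, everything else reduces to direct computation in the spectral representation.
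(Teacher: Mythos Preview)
Your proof is correct, but it takes a much more elaborate route than the paper's. The paper uses the \emph{other} form of Duhamel,
\[
\psi(t) = e^{-iHt}\phi + \lambda\int_0^t e^{(-iH+\lambda P_\phi)(t-s)}\phi\,\langle\phi,e^{-iHs}\phi\rangle\,ds,
\]
which places the full semigroup on the outside. Setting $h(t)=\|\psi(t)\|$, the triangle inequality and the monotonicity of $h$ from \eqref{mono} immediately give $h(t)\leq 1+\lambda\tau\, h(t)$ for $0<\lambda<\tau^{-1}$ (hence $h(t)\leq(1-\lambda\tau)^{-1}$ and the monotone limit exists); for $\lambda<0$ the decreasing limit $h_\infty$ exists a priori, and dominated convergence in the reverse inequality yields $h_\infty\geq 1-|\lambda|\tau\, h_\infty$. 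The entire argument is a handful of lines and uses nothing beyond the triangle inequality and monotonicity.

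Your approach---reducing to the scalar Volterra equation for $u=\langle\phi,\psi\rangle$, passing to the spectral representation, and identifying the limit explicitly as $\int|1-\lambda\tilde g(-i\nu)|^{-2}\,d\mu(\nu)$---is correct and yields strictly more information (a closed formula for the limiting rate). But it comes at a real cost: for $\lambda<0$ with $|\lambda|\tau\geq 1$ you must invoke the half-line Paley--Wiener theorem for Volterra resolvents to secure $u\in L^1$, a nontrivial external result that the paper avoids entirely. The paper never needs $u\in L^1$ because it works directly with the norm $h(t)$, whose monotonicity does all the work. In short: your argument is sound and more informative, the paper's is far more elementary.
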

 \noindent{}The quantity $\left\langle\phi,\mathrm{e}^{-iHt}\phi \right\rangle$ is the overlap between the source state at time $0$ and at time $t$. Thus $\tau$ should be regarded as a measure for how long it takes the time evolution to transport an emitted particle away from its source. In this context, $|\lambda|<\tau^{-1}$ means, that the strength of the source is smaller than the ``transport capacity'' of the time evolution. Therefore, the emitted particles hardly influence each other, and in essence the bosonic or fermionic character does not show in the evolution of the density matrix. The limit particle production is constant as in the classical case. An example for such $\phi$  in the physically relevant case $\mathcal{H}=L^2\left(\mathbb{R}^3 \right)$, $H_0=-\Delta$ is $\phi\in L^1\cap L^2\left(\mathbb{R}^3 \right)$, since $\left\langle\phi,\mathrm{e}^{-iHt}\phi \right\rangle$ has a $t^{-3/2}$ decay, as can be deduced from the free propagator
\begin{equation}
\label{kernel}
 \left( \mathrm{e}^{-iH_0t}\phi\right) (x)=\left(4\pi i t \right)^{-\frac{d}{2}}\int_{\mathbb{R}^d}e^{\frac{i\vert x-y\vert^2}{4t}}\phi(y)\mathrm{d}y 
\end{equation}
for $\phi\in L^1\cap L^2\left(\mathbb{R}^d \right)$ (compare \cite{resi2}, Chapter IX.7).
For fermions, this is also an example where one has an explicit estimate for the speed of particle production. However, the form of the lower bound in (\ref{growthrates}) already suggests that the limit growth rate might decrease to $0$ as $\lambda\rightarrow-\infty$. In fact, we have
\begin{thm}
\label{ratetozero}
 For $\phi\in\mathcal{D}(H)$, we have
\begin{equation}
 \lim_{\lambda\rightarrow-\infty}\lim_{t\rightarrow\infty}\left(\frac{d}{dt}N_\lambda(t)\right)=0.
\end{equation}
\end{thm}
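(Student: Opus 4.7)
Since $\rho_0=0$, (\ref{trg}) gives $\tfrac{d}{dt}N_\lambda(t)=2|\lambda|\|\psi_\lambda(t)\|^2$ with $\psi_\lambda(t):=e^{(-iH+\lambda P_\phi)t}\phi$, and (\ref{mono}) with $\psi=\phi$ yields $\tfrac{d}{dt}\|\psi_\lambda\|^2=2\lambda|u_\lambda|^2$, where $u_\lambda(t):=\langle\phi,\psi_\lambda(t)\rangle$. For $\lambda<0$ the norm is monotonically decreasing, so $\nu(\lambda):=\lim_{t\to\infty}\|\psi_\lambda(t)\|^2$ exists in $[0,1]$, and the claim becomes $|\lambda|\nu(\lambda)\to 0$. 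Duhamel for $\psi_\lambda$ relative to $e^{-iHt}$, projected onto $\phi$, gives the scalar Volterra equation
\[
u_\lambda(t)=g(t)+\lambda\int_0^t g(t-s)\,u_\lambda(s)\,ds,\qquad g(t):=\langle\phi,e^{-iHt}\phi\rangle,
\]
while integrating the monotonicity identity yields $\nu(\lambda)=1-2|\lambda|\int_0^\infty|u_\lambda|^2\,dt$. After the rescaling $\tau=|\lambda|t$ and $v_\lambda(\tau):=u_\lambda(\tau/|\lambda|)$,
\[
v_\lambda(\tau)+\int_0^\tau g\bigl((\tau-\sigma)/|\lambda|\bigr)\,v_\lambda(\sigma)\,d\sigma=g(\tau/|\lambda|),\qquad |\lambda|\nu(\lambda)=|\lambda|\Bigl(1-2\int_0^\infty|v_\lambda|^2\,d\tau\Bigr).
\]

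\textbf{Formal limit and the cancellation from $\mathcal{D}(H)$.} As $|\lambda|\to\infty$, $g(\cdot/|\lambda|)\to g(0)=1$ pointwise, and the formal limit equation $v_\infty(\tau)+\int_0^\tau v_\infty\,d\sigma=1$ is solved by $v_\infty(\tau)=e^{-\tau}$, with $\int_0^\infty|v_\infty|^2\,d\tau=1/2$. To control the rate, I would use the Taylor expansion coming from $\phi\in\mathcal{D}(H)$: namely $g(t)=1-iE_0 t+R(t)$ with $E_0:=\langle\phi,H\phi\rangle\in\mathbb{R}$ and $|R(t)|\le\tfrac12\|H\phi\|^2\,t^2$. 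Writing $v_\lambda=e^{-\tau}+w_\lambda$ and solving the rescaled Volterra equation perturbatively to first order gives $w_\lambda(\tau)=-iE_0\,\tau\,e^{-\tau}/|\lambda|+O(|\lambda|^{-2})$, whose leading part is \emph{purely imaginary}. From
\[
\int_0^\infty|v_\lambda|^2\,d\tau-\tfrac12=2\operatorname{Re}\!\int_0^\infty e^{-\tau}\,\overline{w_\lambda(\tau)}\,d\tau+\int_0^\infty|w_\lambda(\tau)|^2\,d\tau,
\]
the first term then vanishes at order $1/|\lambda|$ (because $E_0\in\mathbb{R}$ kills the $\int \tau e^{-2\tau}\,d\tau$ contribution), and both terms on the right are $O(|\lambda|^{-2})$; hence $|\lambda|\nu(\lambda)=O(|\lambda|^{-1})\to 0$.

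\textbf{Main obstacle.} The delicate step is promoting the formal expansion to quantitative $L^2$ estimates over the infinite interval, since $g$ need not decay and $g(\cdot/|\lambda|)-1$ is not uniformly small in $\tau$. I would exploit the a priori bound $\int_0^\infty|v_\lambda|^2\,d\tau\le\tfrac12$ (inherited from $\nu(\lambda)\ge 0$) together with the explicit representation $w_\lambda(\tau)=\int_0^\tau e^{-(\tau-s)}h_\lambda(s)\,ds$, where $h_\lambda$ is built from $g'(s/|\lambda|)/|\lambda|$ and a convolution with $v_\lambda$; bootstrapping on this representation controls both the $L^2$-norm and the pairing against $e^{-\tau}$. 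The algebraic cancellation underlying the whole argument reduces to the Laplace-type identity
\[
\int_0^\infty e^{-|\lambda|u}g'(u)\,du=\frac{-iE_0}{|\lambda|}+O(|\lambda|^{-2}),
\]
which follows from the resolvent expansion $\langle\phi,(I+i\epsilon H)^{-1}\phi\rangle=1-i\epsilon E_0+O(\epsilon^2\|H\phi\|^2)$ at $\epsilon=1/|\lambda|$. The hypothesis $\phi\in\mathcal{D}(H)$ is exactly what enters: it makes $g''$ bounded by $\|H\phi\|^2$ and forces the leading correction to $g$ at the origin to be purely imaginary.
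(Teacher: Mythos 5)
Your reduction is sound: for $\lambda<0$ the norm $\|\psi_\lambda(t)\|^2$ decreases monotonically, so $\lim_t \tfrac{d}{dt}N_\lambda(t)=2|\lambda|\nu(\lambda)$ exists and the theorem amounts to $\nu(\lambda)=o(|\lambda|^{-1})$; the identity $\nu(\lambda)=1-2\int_0^\infty|v_\lambda|^2\,d\tau$ and the formal limit profile $v_\infty(\tau)=e^{-\tau}$ (which exactly saturates the identity) are also correct, as is the Laplace/resolvent computation showing that $\phi\in\mathcal{D}(H)$ makes the first correction to $g$ at the origin purely imaginary. However, the step you yourself flag as the main obstacle is a genuine gap, not a technicality. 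The perturbation $g(\cdot/|\lambda|)-1$ is \emph{not} small in any uniform norm on $[0,\infty)$: it is $O(\tau/|\lambda|)$ only for $\tau\lesssim|\lambda|$ and can be of order $1$ beyond that, and it need not decay, so a Neumann/perturbation expansion of the rescaled Volterra equation does not deliver $w_\lambda=-iE_0\tau e^{-\tau}/|\lambda|+O(|\lambda|^{-2})$ in $L^2(\mathbb{R}^+_\tau)$. What you actually need is the very fine statement $\int_0^\infty|v_\lambda|^2\,d\tau=\tfrac12+O(|\lambda|^{-2})$, and the a priori bound $\int|v_\lambda|^2\le\tfrac12$ gives no control from below; in particular you would have to rule out that $v_\lambda$ loses an $O(|\lambda|^{-1})$ amount of $L^2$ mass from the region $\tau\gtrsim|\lambda|$, which requires a uniform (in $\lambda$) decay estimate on $v_\lambda$ that is nowhere established. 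As written, the argument does not close.

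The paper avoids all of this with a one-line trick that you may want to adopt: apply Duhamel with the roles of the two generators reversed, treating $-iH$ as the perturbation of $\lambda P_\phi$. Since $e^{\lambda P_\phi s}\phi=e^{\lambda s}\phi$ and the semigroup $e^{(-iH+\lambda P_\phi)r}$ is a contraction for $\lambda<0$ (by (\ref{mono})), one gets
\begin{equation*}
\mathrm{e}^{(-iH+\lambda P_\phi)t}\phi=e^{\lambda t}\phi-i\int_0^t e^{\lambda s}\,\mathrm{e}^{(-iH+\lambda P_\phi)(t-s)}H\phi\,\mathrm{d}s,
\qquad
\bigl\Vert\mathrm{e}^{(-iH+\lambda P_\phi)t}\phi\bigr\Vert\le e^{-|\lambda|t}+\frac{\Vert H\phi\Vert}{|\lambda|},
\end{equation*}
hence $\nu(\lambda)\le\Vert H\phi\Vert^2|\lambda|^{-2}$ and $2|\lambda|\nu(\lambda)\le 2\Vert H\phi\Vert^2|\lambda|^{-1}\to0$. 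This uses $\phi\in\mathcal{D}(H)$ in exactly the place your argument does, but bypasses the delicate half-line $L^2$ analysis entirely. If you wish to salvage your route, the missing ingredient is precisely a substitute for this contraction-plus-reversed-Duhamel bound, i.e.\ a proof that $|v_\lambda(\tau)|\le e^{-\tau}+C/|\lambda|$ uniformly, which is essentially the paper's estimate in rescaled variables.
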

 For bosons, the behavior at large values of $\lambda$ is qualitatively very different. In this regime, exponential growth occurs for arbitrary choices of the Hamiltonian $H$ and the source state $\phi$.
\begin{thm}
 \label{pointspec}
For sufficiently large $\lambda>0$ the operator $iH+\lambda P_\phi$ has an eigenvalue $\alpha(\lambda)$ with positive real part, and $\lim_{\lambda\rightarrow\infty}\lambda^{-1}\alpha(\lambda)=1$.
\end{thm}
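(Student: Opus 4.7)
The plan is to reduce the eigenvalue equation for $iH+\lambda P_\phi$ to a scalar transcendental equation via the rank-one structure of $P_\phi$, and then locate a zero by Rouch\'e's theorem. Suppose $(iH+\lambda P_\phi)\psi=\alpha\psi$ with $\operatorname{Re}\alpha>0$. Since $H$ is self-adjoint, the spectrum of $iH$ lies on $i\mathbb{R}$ and $\alpha-iH$ is boundedly invertible. Rearranging yields $(\alpha-iH)\psi=\lambda\langle\phi,\psi\rangle\phi$; vanishing of $\langle\phi,\psi\rangle$ would force $\psi=0$, so I may normalise $\langle\phi,\psi\rangle=1$, giving $\psi=\lambda(\alpha-iH)^{-1}\phi$ and, upon pairing with $\phi$,
\begin{equation}\label{scalar}
\lambda F(\alpha)=1,\qquad F(\alpha):=\langle\phi,(\alpha-iH)^{-1}\phi\rangle=\int_\mathbb{R}\frac{\mathrm{d}\mu_\phi(\mu)}{\alpha-i\mu},
\end{equation}
where $\mu_\phi$ is the spectral probability measure of $H$ associated with $\phi$. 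Conversely, every solution of (\ref{scalar}) with $\operatorname{Re}\alpha>0$ supplies an eigenvalue, with eigenvector $(\alpha-iH)^{-1}\phi$. So the task reduces to locating a zero of $h(\alpha):=\lambda F(\alpha)-1$ in the open right half-plane.

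The analytic input is that $\alpha F(\alpha)\to1$ as $\alpha\to\infty$ in a cone around the positive real axis, uniformly on the circles $\Gamma_\lambda=\{|\alpha-\lambda|=\delta\lambda\}$ for any fixed $\delta\in(0,1)$. Writing $\alpha F(\alpha)-1=\int\frac{i\mu}{\alpha-i\mu}\,\mathrm{d}\mu_\phi(\mu)$ and using the elementary estimates $|\alpha-i\mu|\geq\operatorname{Re}\alpha\geq(1-\delta)\lambda$ together with $|\alpha-i\mu|\geq|\mu|/2$ whenever $|\mu|\geq 2|\operatorname{Im}\alpha|$, splitting the $\mu$-integration at $|\mu|=\lambda^{1/2}$ yields
\begin{equation*}
\sup_{\alpha\in\Gamma_\lambda}|\alpha F(\alpha)-1|\leq\frac{1}{(1-\delta)\lambda^{1/2}}+2\,\mu_\phi\bigl(\{|\mu|>\lambda^{1/2}\}\bigr),
\end{equation*}
which tends to $0$ as $\lambda\to\infty$ because $\mu_\phi$ is a finite measure.

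To invoke Rouch\'e, I decompose
\begin{equation*}
h(\alpha)=\frac{\lambda-\alpha}{\alpha}+\frac{\lambda}{\alpha}\bigl(\alpha F(\alpha)-1\bigr).
\end{equation*}
On $\Gamma_\lambda$ the first summand has modulus $\delta\lambda/|\alpha|\geq\delta/(1+\delta)$, while the second is dominated by $(1-\delta)^{-1}\sup_{\Gamma_\lambda}|\alpha F(\alpha)-1|$, which is arbitrarily small for large $\lambda$. Hence, once $\lambda$ is sufficiently large, the second term is strictly less than the first on $\Gamma_\lambda$, so $h$ has the same number of zeros inside $\Gamma_\lambda$ as $(\lambda-\alpha)/\alpha$, namely exactly one. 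Call it $\alpha(\lambda)$; it satisfies $\operatorname{Re}\alpha(\lambda)\geq(1-\delta)\lambda>0$ and $|\alpha(\lambda)/\lambda-1|<\delta$. Since $\delta\in(0,1)$ was arbitrary, $\lambda^{-1}\alpha(\lambda)\to 1$.

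The main obstacle is the uniform convergence in the second paragraph without any regularity assumption on $\phi$: a naive Neumann expansion of $\alpha F(\alpha)$ in powers of $\lambda^{-1}$ would require moments $\int\mu^{k}\,\mathrm{d}\mu_\phi$ which need not be finite, so one must instead exploit the tail decay of an arbitrary probability measure as in the split estimate above.
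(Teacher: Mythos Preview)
Your argument is correct and takes a genuinely different route from the paper. The paper works operator-theoretically: it rescales to $iH/\lambda+P_\phi$, introduces an energy cutoff $Q_E=\mathbf{1}_{[-E,E]}(H)$, and uses Riesz projections (Kato's analytic perturbation theory) twice---first to perturb $P_{\phi_E}$ by the bounded operator $iH/\lambda$ on the low-energy subspace, then to remove the cutoff by perturbing $P_{\phi_E}$ to $P_\phi$. Eigenvalue stability under small relative-bounded perturbations does all the work. You instead exploit the rank-one structure directly, reducing to the scalar Birman--Schwinger equation $\lambda F(\alpha)=1$ with $F(\alpha)=\int(\alpha-i\mu)^{-1}\,\mathrm{d}\mu_\phi(\mu)$ and locating the zero by Rouch\'e on the circle $|\alpha-\lambda|=\delta\lambda$.

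What each approach buys: your reduction is more elementary and more explicit---it identifies the eigenvalue as the unique solution of a concrete transcendental equation, which is precisely the formulation needed later in the paper for Theorem~\ref{alllambda}. It also makes transparent why no regularity on $\phi$ is required: the tail-splitting estimate uses only that $\mu_\phi$ is a probability measure, circumventing the moment problem you flag in the last paragraph. The paper's Riesz-projection argument, on the other hand, is agnostic to the rank of the perturbation and would generalise immediately to finite-rank or even relatively compact source terms; the energy cutoff plays the role that your measure-tail estimate plays, namely controlling the unboundedness of $H$ without assuming $\phi\in\mathcal{D}(H)$.

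One minor remark on your split estimate: the bound $|\alpha-i\mu|\geq|\mu|/2$ requires $|\mu|\geq 2|\operatorname{Im}\alpha|$, and since $|\operatorname{Im}\alpha|\leq\delta\lambda$ on $\Gamma_\lambda$ this leaves an intermediate range $\lambda^{1/2}<|\mu|<2\delta\lambda$ for large $\lambda$. There one simply uses $|i\mu/(\alpha-i\mu)|\leq|\mu|/\operatorname{Re}\alpha\leq 2\delta/(1-\delta)\leq 2$ for $\delta\leq 1/2$, so the stated bound is fine; the constant is immaterial to the Rouch\'e comparison.
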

If such an eigenvalue $\alpha(\lambda)$  with normalized eigenvector $\psi$ exists, the number of bosons can be estimated by 
\begin{equation}
\label{expest}
 \operatorname{tr}(\rho(t))\geq\left\langle \psi, \rho(t) \psi\right\rangle\geq {e}^{2\operatorname{Re} \alpha(\lambda)t}-1.
\end{equation}
Hence Theorem \ref{pointspec} implies the existence of a critical strength $\lambda_\mathrm{c}\geq0$ such that the number of particles grows exponentially in time for all $\lambda>\lambda_\mathrm{c}$. For source states $\phi$ as in Theorem \ref{lingrowth}, our result implies a dynamical phase transition from linear ($0<\lambda<\tau^{-1}$) to exponential ($\lambda>\lambda_\mathrm{c}$) growth. The general picture is more complicated, however. If $H$ has point spectrum, it is obvious that choosing $\phi$ as an eigenvector of $H$ will generate exponential growth for all $\lambda>0$: $N(t)={e}^{2\lambda t}-1$. But also for $H=H_0$ there  are source states $\phi$ which generate exponential growth for \emph{all} $\lambda>0$.
\begin{thm}
\label{alllambda}
 Let $\phi\in\mathcal{D}(H_0)=H^2\left(\mathbb{R}^d\right) $ with Fourier transform
\begin{equation}
\label{sourcestate}
 \hat\phi(p)=|S^{d-1}|^{-{1}/{2}}|p|^{(1-d)/{2}}\sqrt{\frac{3}{\pi(|p|^6+1)}}.
\end{equation}
Then $iH_0+\lambda P_\phi$ has an eigenvalue $\alpha(\lambda)$ with positive real part for all $\lambda>0$.
\end{thm}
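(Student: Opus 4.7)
The strategy is the standard Birman--Schwinger reduction for rank-one perturbations, as already used in the proof of Theorem~\ref{pointspec}. For $\operatorname{Re}\alpha>0$ the resolvent $R(\alpha):=(\alpha-iH_0)^{-1}$ is bounded since $iH_0$ is skew-adjoint, and the eigenvalue equation $(iH_0+\lambda P_\phi)\psi=\alpha\psi$ is equivalent, via $\psi=\lambda\langle\phi,\psi\rangle R(\alpha)\phi$, to the scalar characteristic equation
\begin{equation*}
\lambda F(\alpha)=1,\qquad F(\alpha):=\langle\phi,R(\alpha)\phi\rangle.
\end{equation*}
Hence I must show that, for each $\lambda>0$, the real value $1/\lambda$ lies in the range of $F$ over the open right half-plane. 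By Plancherel and spherical symmetry, the specific normalization $|S^{d-1}|^{-1}|p|^{1-d}$ in $|\hat\phi(p)|^2$ is engineered to cancel the angular Jacobian, so with $r=|p|$ one obtains
\begin{equation*}
F(\alpha)=\int_0^\infty \frac{3\,dr}{\pi(r^6+1)(\alpha-ir^2)}.
\end{equation*}

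This $F$ is holomorphic on $\{\operatorname{Re}\alpha>0\}$ and tends to $0$ as $|\alpha|\to\infty$ by dominated convergence, while the rescaling $r=\sqrt{|\alpha|}\,s$ yields the key asymptotic
\begin{equation*}
F(\alpha)\sim \tfrac{3}{2}e^{i\pi/4}\,\alpha^{-1/2}\qquad (\alpha\to 0,\ \operatorname{Re}\alpha>0),
\end{equation*}
so $|F|$ blows up at the origin. This divergence, which comes from the $|p|^{1-d}$ factor producing a bounded radial density near $p=0$, is the decisive design feature of the source state \eqref{sourcestate}.

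To produce a preimage of any prescribed $\mu:=1/\lambda>0$, my plan is to apply the argument principle on $\partial\Omega_{\epsilon,R}$ with $\Omega_{\epsilon,R}:=\{\epsilon<|\alpha|<R,\ \operatorname{Re}\alpha>0\}$, for $\epsilon$ small and $R$ large, and count the winding number of $F(\partial\Omega_{\epsilon,R})$ around $\mu$. The image is controlled in four pieces. On the outer arc $|\alpha|=R$ one has $F\approx0$. On the inner arc $|\alpha|=\epsilon$ the asymptotic above gives $F(\epsilon e^{i\theta})\approx(3/(2\sqrt\epsilon))\exp(i(\pi/4-\theta/2))$, a large quarter-arc sweeping from the positive real axis, through the first quadrant, to the positive imaginary axis as $\theta$ runs from $\pi/2$ down to $-\pi/2$. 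On the negative imaginary segment $\alpha=i\xi$ with $\xi<0$, the integrand $[i(\xi-r^2)]^{-1}$ is purely imaginary, so $F(i\xi)\in i\mathbb{R}_+$ and moves from near $0$ to near $3i/(2\sqrt\epsilon)$. On the positive imaginary segment, Sokhotski--Plemelj combined with a partial-fraction evaluation using $r^6+1=(r^2+1)(r^2-e^{i\pi/3})(r^2-e^{-i\pi/3})$ and $\int_0^\infty dr/(r^2+c)=\pi/(2\sqrt{c})$ gives the boundary value explicitly as
\begin{equation*}
F(iy^+)=\frac{3}{2\sqrt{y}(y^3+1)}-i\,\frac{2y^2+y+2}{2(y^3+1)},
\end{equation*}
which lies in the fourth quadrant with real part monotonically running from $\infty$ down to $0$. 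Stitching the four pieces together, the resulting closed curve encloses each $\mu>0$ once for $\epsilon$ sufficiently small, so the winding number is $+1$ and a solution $\alpha\in\Omega_{\epsilon,R}$ of $F(\alpha)=\mu$ exists.

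The hardest part is the rigorous winding count. The two large pieces (inner arc and positive imaginary segment) meet with an order-one discrepancy between the leading $|\alpha|^{-1/2}$ asymptotic and the Sokhotski--Plemelj correction, and one has to verify that the image does not meander across the ray $\{\mu\}$ in a way that would kill the winding contribution. A cleaner but more computational alternative is to push the partial-fraction computation all the way and express $F(\alpha)$ in closed form as a rational function of $\alpha$ plus a term proportional to $1/((1+i\alpha^3)\sqrt{i\alpha})$; then the substitution $w=\sqrt{i\alpha}$ converts $\lambda F(\alpha)=1$ into a polynomial equation in $w$, and tracking its roots shows directly that for every $\lambda>0$ at least one root lies in the sector corresponding to $\operatorname{Re}\alpha>0$.
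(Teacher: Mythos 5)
Your proposal is correct and reaches the result by a genuinely different route in the decisive step, even though both arguments begin identically: the rank-one structure reduces the eigenvalue problem to the scalar characteristic equation $\lambda\langle\phi,(\alpha-iH_0)^{-1}\phi\rangle=1$, and the radial normalization in (\ref{sourcestate}) reduces the matrix element to the one-dimensional integral you write down. The paper then substitutes $\alpha=iz_0^2$, evaluates the integral by residues in closed form, and converts the characteristic equation into the quartic $p(z)=0$; the existence of a root with $\operatorname{Im}z_0>0$ for all $\lambda>0$ is obtained by computing $\operatorname{Im}\bigl(\overline{p}\,p'\bigr)$ at $z=\sqrt{\lambda/2}$ for small $\lambda$, excluding real roots for every $\lambda>0$, and invoking continuity of the roots in $\lambda$ — this is exactly the ``cleaner but more computational alternative'' of your final paragraph, since $w=\sqrt{i\alpha}$ is the paper's $z_0$ up to normalization. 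Your main line instead applies the argument principle to $F$ on a half-annulus, exploiting the $\tfrac32 e^{i\pi/4}\alpha^{-1/2}$ blow-up at the origin and the sign of $\operatorname{Im}F$ on the two imaginary half-axes; your boundary formula $F(iy^+)=\tfrac{3}{2\sqrt y(y^3+1)}-i\tfrac{2y^2+y+2}{2(y^3+1)}$ checks out (it follows from the closed form $F(\alpha)=\tfrac{i}{1+i\alpha^3}\bigl(\tfrac{3}{2\sqrt{i\alpha}}-\tfrac{2(i\alpha)^2-i\alpha+2}{2}\bigr)$, whose apparent pole at $\alpha=e^{i\pi/6}$ is removable), and since $\operatorname{Im}F(iy^+)<0$ strictly while $|F|\ge \tfrac{3}{2\sqrt\epsilon}-C$ on the inner arc, every real-axis crossing of the image occurs far to the right of $\mu=1/\lambda$, so the net winding is $+1$ regardless of the $O(1)$ meandering you worry about. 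What each approach buys: yours isolates the actual design feature of the source state — a nonvanishing bounded radial density at $p=0$, forcing $\operatorname{Re}F\to+\infty$ at the spectral threshold — and would survive perturbations of the profile $3/(\pi(|p|^6+1))$, whereas the paper's argument is self-contained elementary algebra once the residue computation is done, and needs no discussion of boundary values of the resolvent. The one point you should make explicit to be fully rigorous is the uniformity of the $O(1)$ error on the inner arc (immediate from the closed form above) and the fact that $F$ is only continuous, not holomorphic, across the positive imaginary axis, so the argument principle must be applied on $\{\operatorname{Re}\alpha>\delta\}\cap\Omega_{\epsilon,R}$ and $\delta\to0$ taken afterwards.
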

In the proof of this theorem, the eigenvalue will not be computed explicitly. With a view towards the semiclassical limit, one can infer that $\operatorname{Re}\alpha(\lambda)=o(\lambda)$ has to hold. In this limit one considers a source with activity $\lambda=c\epsilon$ on a time scale $\epsilon^{-1}t$ and thus the exponent in (\ref{expest}),
\begin{equation}
 2\operatorname{Re}\alpha(c\epsilon)\epsilon^{-1}t,
\end{equation}
has to vanish in the limit $\epsilon\rightarrow0$ so to yield the linear classical growth behaviour.

\indent{}In contrast to Theorem \ref{lingrowth} the source state (\ref{sourcestate}) has an overlap decay as $1/\sqrt{t}$, thus $\tau=\infty$. As a consequence, the emitted particles stay close to the source for a long time and, being bosonic, they pull further particles out of the source, which leads to an exponential growth of the particle number regardless of how small $\lambda$.

\subsection{Convergence of $\rho(t)$}
Having studied $N(t)$ one may ask whether $\rho(t)$ has a limit as $t\rightarrow\infty$. Since mostly $N(t)\rightarrow\infty$ for large $t$, the natural notion is to study the local limit of $\rho(t)$. Let $\Omega\subset\mathbb{R}^d$ be a bounded region and let $P_\Omega$ denote the orthogonal projection of $\mathcal{H}$ onto the subspace $L^2(\Omega)$. We consider the number of particles in $\Omega$, i.e.
\begin{equation}
 N_\Omega(t)=\operatorname{tr}(P_\Omega\rho(t)P_\Omega).
\end{equation}
If this quantity stays bounded as $t\rightarrow\infty$, one can use that $P_\Omega\rho(t)P_\Omega$ has a positive time derivative (at least for $\rho_0=0$) to infer that the restricted density matrix $P_\Omega\rho(t)P_\Omega$ has even a trace class limit.

A first result shows that for fermions and $H=H_0$ this limit, as far as it exists, does not depend on the initial condition $\rho_0$.
\begin{thm}
\label{independence}
 For $\Omega\subset\mathbb{R}^d$, with finite Lebesgue measure $|\Omega|<\infty$, $\lambda<0$ and $\rho_0$ an arbitrary trace class operator, it holds
\begin{equation}
 \Vert P_\Omega \mathrm{e}^{(-iH_0+\lambda P_\phi)t}\rho_0\mathrm{e}^{(iH_0+\lambda P_\phi)t} P_\Omega\Vert_{\operatorname{tr}}\rightarrow0\hspace{4mm}\mbox{as }t\rightarrow\infty.
\end{equation}
\end{thm}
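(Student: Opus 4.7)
The plan is to reduce the trace-norm statement to the single-vector claim $\|P_\Omega e^{At}\psi\|_{L^2}\to 0$, with $A:=-iH_0+\lambda P_\phi$, and then combine a wave-operator argument with the dispersive decay of the free Schr\"odinger flow. Because $\lambda<0$ gives $A+A^*=2\lambda P_\phi\leq 0$, the semigroup $e^{At}$ is a contraction on $L^2(\mathbb{R}^d)$. Using a singular value decomposition $\rho_0=\sum_k\sigma_k|u_k\rangle\langle v_k|$ with $\sum_k\sigma_k=\|\rho_0\|_{\operatorname{tr}}$, one obtains the rank-one trace-norm bound
\[ \|P_\Omega e^{At}\rho_0 e^{A^*t}P_\Omega\|_{\operatorname{tr}}\leq \sum_k\sigma_k\,\|P_\Omega e^{At}u_k\|\,\|P_\Omega e^{At}v_k\|, \]
with each summand dominated by $\sigma_k$ uniformly in $t$. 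Dominated convergence then reduces the theorem to showing $\|P_\Omega e^{At}\psi\|_{L^2}\to 0$ for every $\psi\in L^2(\mathbb{R}^d)$.

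For this reduced claim, Duhamel yields
\[ e^{iH_0 t}e^{At}\psi-\psi = \lambda\int_0^t f(s)\,e^{iH_0 s}\phi\,ds,\qquad f(s):=\langle\phi,e^{As}\psi\rangle, \]
and the contraction identity $\|e^{At}\psi\|^2=\|\psi\|^2-2|\lambda|\int_0^t|f|^2$ forces $f\in L^2(\mathbb{R}_+)$. I would then establish the existence of the strong $L^2$-limit $W_+\psi:=\lim_{t\to\infty} e^{iH_0 t}e^{At}\psi$---the wave operator for the pair $(A,-iH_0)$. Given $W_+$, write $e^{At}\psi=e^{-iH_0 t}W_+\psi+o_{L^2}(1)$ and apply the dispersive estimate $\|e^{-iH_0 t}\eta\|_\infty\leq(4\pi t)^{-d/2}\|\eta\|_1$ to $\eta=W_+\psi$, using it first on the dense subspace $L^1\cap L^2$ and extending to $L^2$ by density together with $\|P_\Omega\|\leq 1$; this gives $\|P_\Omega e^{-iH_0 t}W_+\psi\|_{L^2}\to 0$, from which the conclusion follows.

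The delicate step is the existence of $W_+$. A naive Cook argument requires $\int_0^\infty|f(s)|\,ds<\infty$, whereas only $f\in L^2$ is at hand. Two routes are available. The abstract route invokes trace-class scattering for dissipative generators: since $\lambda P_\phi$ is rank one (hence trace class) and $H_0=-\Delta$ has purely absolutely continuous spectrum, $iA = H_0+i\lambda P_\phi$ is maximally dissipative with trace-class imaginary part, so Kato--Birman-type theorems for dissipative perturbations give $W_+$ on all of $L^2$. A more self-contained route is to show $L^2$-convergence of $Z(t):=\int_0^t f(s)e^{iH_0 s}\phi\,ds$ directly: the Volterra equation $f=h_0+\lambda f*G_+$ with $G(u)=\langle\phi,e^{-iH_0 u}\phi\rangle$ and $h_0(s)=\langle\phi,e^{-iH_0 s}\psi\rangle$ passes under Laplace transform to $\widetilde f(z)=\widetilde h_0(z)/(1-\lambda\widetilde G(z))$, and the bound $\operatorname{Re}\widetilde G(z)>0$ on $\{\operatorname{Re}z>0\}$ combined with $\lambda<0$ yields $|1-\lambda\widetilde G|\geq 1$, which is exactly the ingredient needed to transfer the $L^2$-control from $h_0$ to the boundary values of $\widehat f$ and thereby to pass to the limit in $Z(t)$. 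Resolving this step is the main technical obstacle; the remaining Duhamel and dispersive ingredients are standard.
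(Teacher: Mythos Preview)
Your reduction to the single-vector statement $\|P_\Omega e^{At}\psi\|_{L^2}\to 0$ via singular value decomposition and dominated convergence is correct and matches the paper (which phrases this step as an appeal to a lemma of Nier converting strong convergence $P_\Omega e^{At}\to 0$ into the trace-norm conclusion).

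Where you diverge is in the single-vector step. You aim for the strong wave operator $W_+=\mathrm{s}\text{-}\lim_{t\to\infty} e^{iH_0 t}e^{At}$ and then invoke dispersive decay of $e^{-iH_0 t}$ on $L^2(\Omega)$. As you yourself flag, only $f\in L^2(\mathbb{R}_+)$ is available, so Cook's criterion fails, and the existence of $W_+$ becomes a genuine piece of dissipative scattering theory (your two suggested routes---Kato--Birman for trace-class dissipative perturbations, or the Laplace/Volterra boundary-value analysis---are both plausible but neither is carried out). So as written the proposal has a gap precisely at the point you identify.

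The paper bypasses the wave operator entirely by a compactness trick. First it proves the \emph{weak} convergence $e^{At}\to 0$: write $\langle\psi,e^{At}\chi\rangle$ via Duhamel as $\langle\psi,e^{-iH_0 t}\chi\rangle$ plus $\lambda\int_0^t\langle\psi,e^{-iH_0(t-s)}\phi\rangle\langle\phi,e^{As}\chi\rangle\,ds$; the first term vanishes by Riemann--Lebesgue, and for $\psi$ in the dense set with bounded spectral density the second is an $L^2*L^2$ convolution (using your own identity $\int_0^\infty|\langle\phi,e^{As}\chi\rangle|^2\,ds\le\|\chi\|^2/2|\lambda|$), hence $C_0$ at infinity. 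Second, the paper observes that $P_\Omega(H_0+i)^{-1}$ is compact (indeed in $\mathcal{J}_q$ for $q>d/2$ since $\mathbf{1}_\Omega\in L^q$ and $(|p|^2+i)^{-1}\in L^q$), and for $\psi\in\mathcal{D}(H_0)$ writes
\[
P_\Omega e^{At}\psi = P_\Omega(H_0+i)^{-1}e^{At}(H_0+i\lambda P_\phi+i)\psi - i\lambda\,P_\Omega(H_0+i)^{-1}P_\phi e^{At}\psi,
\]
so that compactness converts the weak convergence into strong convergence of $P_\Omega e^{At}\psi$. No asymptotic comparison state $W_+\psi$ is needed. This is the idea your outline is missing; it turns the ``main technical obstacle'' into a two-line argument.
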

For all remaining results, we return to $\rho_0=0$ again. The next result is rather obvious: If particles are generated at most linearly in time, and if they move away from  a certain region $\Omega$ in space within finite time, the number of particles in $\Omega$ will not diverge:
\begin{thm}
\label{finlim}
 Let us choose $\phi$ and $\lambda$ such that $N(t)$ has a linear bound and assume that
\begin{equation}
\label{L1}
 \int_0^\infty\|P_\Omega \mathrm{e}^{-iHt}\phi\|\mathrm{d}t<\infty.
\end{equation}
 Then $\operatorname{tr}(P_\Omega\rho(t)P_\Omega)$ approaches a finite limit as $t\rightarrow\infty$.
\end{thm}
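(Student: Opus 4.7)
The plan is to reduce the claim to showing that $t\mapsto \|P_\Omega\psi_t\|$ lies in $L^2(\mathbb{R}_+)$, where I set $\psi_t := \mathrm{e}^{(-iH+\lambda P_\phi)t}\phi$. Since $\rho_0=0$, the integral representation of $\rho(t)$ together with (\ref{adj}) gives $\rho(t) = 2|\lambda|\int_0^t |\psi_s\rangle\langle\psi_s|\,\mathrm{d}s$, and hence
\begin{equation*}
\operatorname{tr}(P_\Omega\rho(t)P_\Omega) = 2|\lambda|\int_0^t \|P_\Omega\psi_s\|^2\,\mathrm{d}s.
\end{equation*}
This quantity is monotone non-decreasing in $t$, so it converges to a finite limit if and only if $s\mapsto\|P_\Omega\psi_s\|^2$ is in $L^1(\mathbb{R}_+)$.

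The main analytic tool I would use is Duhamel's formula expanded around the free evolution,
\begin{equation*}
\psi_t = \mathrm{e}^{-iHt}\phi + \lambda\int_0^t \mathrm{e}^{-iH(t-s)}\phi\,f(s)\,\mathrm{d}s, \qquad f(s):=\langle\phi,\psi_s\rangle.
\end{equation*}
Applying $P_\Omega$ and taking norms yields $\|P_\Omega\psi_t\| \leq g(t) + |\lambda|(g*|f|)(t)$, where $g(t) := \|P_\Omega \mathrm{e}^{-iHt}\phi\|$. By hypothesis (\ref{L1}) we have $g\in L^1(\mathbb{R}_+)$, and since $g(t)\leq\|\phi\|=1$ also $g\in L^\infty\cap L^1\subset L^2$. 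Young's convolution inequality in the form $L^1*L^2\subseteq L^2$ then gives $g*|f|\in L^2$ provided $f\in L^2(\mathbb{R}_+)$, whence $\|P_\Omega\psi_\cdot\|\in L^2$, which is exactly what is needed.

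To establish $f\in L^2$, I would first prove a uniform bound $\|\psi_t\|\leq M$. By (\ref{mono}), $\frac{\mathrm{d}}{\mathrm{d}t}\|\psi_t\|^2 = 2\lambda|f(t)|^2$. For fermions ($\lambda<0$) this derivative is non-positive, so $\|\psi_t\|\leq 1$. For bosons ($\lambda>0$) the norm squared is non-decreasing; combined with $N(t) = 2\lambda\int_0^t\|\psi_s\|^2\,\mathrm{d}s$ and the linear bound $N(t)\leq Ct$, monotonicity yields
\begin{equation*}
\frac{t}{2}\,\|\psi_{t/2}\|^2 \;\leq\; \int_{t/2}^t\|\psi_s\|^2\,\mathrm{d}s \;\leq\; \frac{Ct}{2\lambda},
\end{equation*}
so $\|\psi_\cdot\|$ is uniformly bounded. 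Integrating (\ref{mono}) from $0$ to $t$ and using this bound then produces $\int_0^\infty|f(s)|^2\,\mathrm{d}s \leq (1+M^2)/(2|\lambda|)<\infty$.

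I expect the delicate step to be the uniform bound on $\|\psi_t\|$ in the bosonic case: the linear growth hypothesis on $N(t)$ is \emph{a priori} only a time-averaged statement about $\|\psi_s\|^2$, and it is the monotonicity identity (\ref{mono}) that upgrades it to a pointwise bound. Once that is secured, the Duhamel/Young argument is routine.
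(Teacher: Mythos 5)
Your proposal is correct and follows essentially the same route as the paper: the identity $\frac{d}{dt}\operatorname{tr}(P_\Omega\rho(t)P_\Omega)=2|\lambda|\|P_\Omega\psi_t\|^2$, the Duhamel expansion around the free evolution, and Young's inequality $L^1\ast L^2\subseteq L^2$, with $f=\langle\phi,\psi_\cdot\rangle\in L^2(\mathbb{R}_+)$ deduced from the linear bound on $N(t)$ together with (\ref{mono}). The only difference is that you spell out the derivation of $f\in L^2$ (via the pointwise uniform bound on $\|\psi_t\|$ in the bosonic case), a step the paper asserts without detail.
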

In particular, for $H=H_0$ this theorem applies to $\Omega$ with $|\Omega|<\infty$ for all $\phi\in L^1\cap L^2\left(\mathbb{R}^3 \right)$ and all $\lambda<\tau^{-1}$: (Sub)linearity follows from Theorem \ref{lingrowth}, and (\ref{L1}) is deduced from the explicit form of the integral kernel (\ref{kernel}).\\
\indent{}The next result is characteristic for the behaviour of a fermionic particle source. Consider the fact that in a region $\Omega$ of finite measure, there are only finitely many states with kinetic energy below a certain bound (cf. \cite{perry}, p.\ 27). If the source state has no high-energy contributions, it should only be able to charge a finite number of states, so that the particle number in $\Omega$ stays finite.
\begin{thm}
\label{fermicomp}
 Let $H=H_0$, $\phi\in\mathcal{H}$, with Fourier transform $\hat{\phi}$ supported in $K\subset\mathbb{R}^d$ where $|K|<\infty$, and let $\Omega\subset\mathbb{R}^d$ with $|\Omega|<\infty$. Then, for $\lambda<0$,
\begin{equation}
 \lim_{t\rightarrow\infty}\operatorname{tr}(P_\Omega\rho(t)P_\Omega)\leq(2\pi)^{-d}|\Omega||K|<\infty.
\end{equation}
\end{thm}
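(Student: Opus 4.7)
The plan is to exploit the two ingredients emphasised just before the theorem statement: the source's bounded Fourier support and the Pauli exclusion principle. Together they will confine $P_\Omega\rho(t)P_\Omega$ to a positive operator of trace at most $(2\pi)^{-d}|\Omega||K|$.

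First, let $P_K=\mathcal{F}^{-1}M_{\chi_K}\mathcal{F}$ denote the orthogonal projection onto $\{f\in L^2(\mathbb{R}^d):\operatorname{supp}\hat f\subset K\}$. Because $H_0=-\Delta$ is a Fourier multiplier, $[H_0,P_K]=0$; and because $\hat\phi$ is supported in $K$ we have $P_K\phi=\phi$, so $P_K P_\phi=P_\phi=P_\phi P_K$. Consequently $P_K$ commutes with the generator $-iH_0+\lambda P_\phi$, and therefore with the semigroup $\mathrm{e}^{(-iH_0+\lambda P_\phi)t}$ on both sides.

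Second, starting from $\rho_0=0$ the integral representation from Section 2 gives
\begin{equation*}
\rho(t)=2|\lambda|\int_0^t \mathrm{e}^{(-iH_0+\lambda P_\phi)s}P_\phi\, \mathrm{e}^{(iH_0+\lambda P_\phi)s}\,ds.
\end{equation*}
Combined with $P_\phi=P_K P_\phi P_K$, the commutations above yield $\rho(t)=P_K\rho(t)P_K$ for all $t\geq 0$. The fermionic property $\rho(t)\leq\mathbf{1}$, preserved by the evolution for $\lambda<0$, then implies $P_\Omega\rho(t)P_\Omega=(P_\Omega P_K)\rho(t)(P_\Omega P_K)^*\leq P_\Omega P_K P_\Omega$. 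Taking traces, using cyclicity and Plancherel,
\begin{equation*}
\operatorname{tr}(P_\Omega\rho(t)P_\Omega)\leq\operatorname{tr}(P_\Omega P_K P_\Omega)=\|P_K P_\Omega\|_{\mathrm{HS}}^2=(2\pi)^{-d}|\Omega||K|,
\end{equation*}
uniformly in $t$.

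Finally, to upgrade this uniform bound to a bound on the limit, observe that for $\rho_0=0$ the map $t\mapsto P_\Omega\rho(t)P_\Omega$ is monotone non-decreasing in the operator sense: its time derivative $2|\lambda|P_\Omega \mathrm{e}^{(-iH_0+\lambda P_\phi)t}P_\phi \mathrm{e}^{(iH_0+\lambda P_\phi)t}P_\Omega$ is manifestly positive. Together with the uniform trace bound just established, monotone convergence gives a finite trace limit satisfying the stated inequality. There is no serious obstacle to this argument; the only nontrivial input is the Fourier-support invariance of the dynamics, which reduces the theorem to the geometric statement ``at most one fermion per $(2\pi)^d$ cell of the phase-space box $\Omega\times K$''.
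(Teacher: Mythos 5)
Your proof is correct, and it takes a genuinely different route from the paper's. The paper also begins with the Fourier-support invariance (in the form $P_\Omega \mathrm{e}^{(-iH_0+\lambda P_\phi)t}\phi=P_\Omega 1_K(-i\nabla)\mathrm{e}^{(-iH_0+\lambda P_\phi)t}\phi$), but then evaluates $\lim_{t\to\infty}\operatorname{tr}(P_\Omega\rho(t)P_\Omega)=2|\lambda|\int_0^\infty\Vert P_\Omega 1_K(-i\nabla)\mathrm{e}^{(-iH_0+\lambda P_\phi)t}\phi\Vert^2\mathrm{d}t$ and bounds the time integral by applying the $L^2$-in-time estimate (\ref{L2est}), $\int_0^\infty|\langle\chi,\mathrm{e}^{(-iH+\lambda P_\phi)t}\phi\rangle|^2\mathrm{d}t\leq\Vert\chi\Vert^2/(2|\lambda|)$, term by term in a singular value decomposition of the Hilbert--Schmidt operator $P_\Omega 1_K(-i\nabla)$; there the Fermi statistics enter only implicitly through the sign of $\lambda$ in (\ref{L2est}). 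You instead invoke the exclusion principle directly as the operator inequality $\rho(t)\leq\mathbf{1}$, combine it with the invariance $\rho(t)=P_K\rho(t)P_K$ to get $P_\Omega\rho(t)P_\Omega\leq P_\Omega P_K P_\Omega$, and conclude by the trace-norm computation $\operatorname{tr}(P_\Omega P_K P_\Omega)=(2\pi)^{-d}|\Omega||K|$ (the same Reed--Simon Theorem XI.20 input the paper uses). Your version buys a bound that is uniform in $t$ rather than only in the limit, makes the ``one fermion per phase-space cell'' heuristic literal, and would extend verbatim to any fermionic initial datum $\rho_0\leq\mathbf{1}$ supported in $K$ in momentum; the paper's version yields in addition the exact identity between the limiting local particle number and the time-integrated flux through $\Omega$. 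Your final monotone-convergence step (positivity of $\frac{d}{dt}P_\Omega\rho(t)P_\Omega$ plus the uniform trace bound) correctly supplies existence of the limit, which the paper obtains from the same monotonicity.
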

The upper bound corresponds to the phase space volume of fermions, i. e. to one fermion per unit cell.
\subsection{Semiclassical limit}
To pass from the Hilbert space formulation of quantum mechanics to phase space, one convenient tool is the Wigner transform. One defines
\begin{equation}
 \label{wigner}
W[\kappa](x,p)=(2\pi)^{-d}\int_{\mathbb{R}^d}\mathrm{e}^{-i{p\cdot y}}\kappa\left(x+\frac{y}{2},x-\frac{y}{2} \right) \mathrm{d}y,
\end{equation}
where $\kappa\in L^2\left( \mathbb{R}^d_x\times\mathbb{R}^d_y\right)$ is the integral kernel of a Hilbert-Schmidt operator.
In particular, since all trace-class operators are Hilbert-Schmidt, we can consider the Wigner transform for any density matrix.
Denoting, as usual, the semiclassical parameter by $\epsilon$, the semiclassical limit corresponds to the small $\epsilon$ behaviour of
\begin{equation}
\label{deffeps}
f^\epsilon(X,P,T)=\epsilon^{-d}W\left[ \rho^\epsilon\left(\epsilon^{-1}T\right)\right] \left(\epsilon^{-1}X,P \right), \hspace{7mm}(X,P,T)\in\mathbb{R}^d\times\mathbb{R}^d\times\mathbb{R}^+,
\end{equation}
where $\rho^\epsilon(t)$ is the solution of
\begin{equation}
 \label{microevol}
 \frac{d}{dt}\rho^\epsilon(t)=-i\left[ H_0,\rho^\epsilon(t)\right]+2|c|\epsilon P_\phi
+ c\epsilon\left( P_\phi\rho^\epsilon(t)+\rho^\epsilon(t)P_\phi\right)
\end{equation}
for all $t\geq0$, with the convention $H_0=-\Delta/2$. The initial densities $\rho_0^\epsilon$ are chosen such that $\|\rho_0^\epsilon\|_{\operatorname{tr}}$ is bounded uniformly in $\epsilon>0$ and
\begin{equation}
\label{indist}
 \epsilon^{-d}W\left[ \rho^\epsilon_0\right] \left(\epsilon^{-1}X,P \right)\rightarrow g(X,P)
\end{equation}
for some distribution $g\in\mathcal{D}'\left(\mathbb{R}^{d}_X\times\mathbb{R}^{d}_P\right)$. The equations (\ref{deffeps})-(\ref{indist}) describe a quantum particle source producing particles on a microscopic scale $(x,p,t)$ correlated to the macroscopic scale $(X,P,T)$ by $(X,P,T)=(\epsilon x, p, \epsilon t)$, the standard semiclassical scaling, see e.g. \cite{nier}. To have a bounded rate $|c|$ on the macroscopic time scale, we have set $\lambda=c\epsilon$. 
In this case, we have the following convergence result for the phase space density $f^\epsilon$.
\begin{thm}
\label{limitdist}
 For all $g\in\mathcal{D}'\left(\mathbb{R}^{d}_x\times\mathbb{R}^{d}_p\right) $, $\phi\in\mathcal{H}$, $\Vert\phi\Vert=1$, and for all $T\geq0$, the limit
\begin{equation}
 \lim_{\epsilon\rightarrow0}f^\epsilon(X,P,T)=f^0(X,P,T)=g(X-PT,P)+2|c|\int_0^T\delta(X-Ps)\vert\hat\phi(P)\vert^2\mathrm{d}s
\end{equation}
holds in the topology of $\mathcal{D}'\left(\mathbb{R}^{d}_X\times\mathbb{R}^{d}_P\right)$. This limit solves
\begin{equation}
\label{limitdiff}
\begin{split}
 \frac{\partial}{\partial T}f^0(X,P,T)+P\cdot\nabla_Xf^0(X,P,T)&=2|c|\delta(X)\vert\hat\phi(P)\vert^2,\\
f^0(X,P,0)&=g(X,P)
\end{split}
\end{equation}
in the sense of distributions on phase space.
\end{thm}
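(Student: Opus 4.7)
The plan is to test $f^\epsilon(\cdot,\cdot,T)$ against a Schwartz function $F\in\mathcal{D}(\mathbb{R}^d_X\times\mathbb{R}^d_P)$ and exploit the exactness of Egorov's theorem for the quadratic Hamiltonian $H_0=-\Delta/2$. By Weyl--Wigner duality, $\langle f^\epsilon(\cdot,\cdot,T),F\rangle=\operatorname{tr}\bigl(\rho^\epsilon(\epsilon^{-1}T)\operatorname{Op}_W(F_\epsilon)\bigr)$ with $F_\epsilon(x,p):=F(\epsilon x,p)$. Substituting the explicit formula for $\rho^\epsilon$ from Section~2 (with $\lambda=c\epsilon$) and rescaling $s=\epsilon^{-1}\tau$, this becomes
\[
2|c|\int_0^T\!\langle V_\epsilon(\epsilon^{-1}\tau)\phi,\operatorname{Op}_W(F_\epsilon)V_\epsilon(\epsilon^{-1}\tau)\phi\rangle\,d\tau+\operatorname{tr}\bigl(\rho_0^\epsilon V_\epsilon(\epsilon^{-1}T)^*\operatorname{Op}_W(F_\epsilon)V_\epsilon(\epsilon^{-1}T)\bigr),
\]
with $V_\epsilon(t):=\mathrm{e}^{(-iH_0+c\epsilon P_\phi)t}$.

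I extract the leading order by replacing every $V_\epsilon$ with the free propagator $U_0(t)=\mathrm{e}^{-iH_0 t}$. Exact Egorov yields $U_0(t)^*\operatorname{Op}_W(F_\epsilon)U_0(t)=\operatorname{Op}_W\bigl(F(\epsilon x+pt,p)\bigr)$. Applied to the source integrand this gives $\int W_\phi(x,p)F(\epsilon x+p\tau,p)\,dx\,dp$, and the change of variables $Y=\epsilon x+p\tau$ combined with the standard distributional limit $\epsilon^{-d}W_\phi(\epsilon^{-1}Y,P)\to\delta(Y)|\hat\phi(P)|^2$ produces $\int|\hat\phi(p)|^2F(p\tau,p)\,dp$; integrating in $\tau$ reproduces the pairing of $2|c|\int_0^T\delta(X-P\tau)|\hat\phi(P)|^2\,d\tau$ with $F$. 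For the initial-data contribution the same Egorov identity together with assumption \eqref{indist} yields $\int g(X,p)F(X+pT,p)\,dX\,dp$, which after $Y=X+pT$ is the pairing of $g(X-PT,P)$ with $F$. A direct computation confirms that the resulting $f^0$ solves \eqref{limitdiff}.

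The substantive step is controlling the error from replacing $V_\epsilon$ by $U_0$, which is not small in operator norm on the macroscopic scale (one has $\|V_\epsilon(\epsilon^{-1}T)-U_0(\epsilon^{-1}T)\|\le\mathrm{e}^{|c|T}-1$). Iterating Duhamel gives $V_\epsilon(t)\phi=U_0(t)\phi+c\epsilon\int_0^t U_0(t-s)\phi\,g_\epsilon(s)\,ds$ with $g_\epsilon(s)=\langle\phi,V_\epsilon(s)\phi\rangle$ uniformly bounded by $\mathrm{e}^{|c|T}$ on the relevant time interval. The resulting cross and quadratic corrections in the source part reduce, after one more application of Egorov, to oscillatory double integrals of the type
\[
\int_0^\tau\!\int_0^\tau\overline{g_\epsilon(\epsilon^{-1}\sigma)}g_\epsilon(\epsilon^{-1}\sigma')\!\int\mathrm{e}^{i|p|^2(\sigma'-\sigma)/(2\epsilon)}|\hat\phi(p)|^2 F(p(\tau-\sigma'),p)\,dp\,d\sigma\,d\sigma'
\]
together with simpler one-dimensional variants; the inner momentum integral vanishes by Riemann--Lebesgue for each $\sigma\ne\sigma'$, and dominated convergence disposes of the measure-zero diagonal. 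The initial-data corrections vanish through a different mechanism: each one reduces to a matrix element $\langle\phi_1,\rho_0^\epsilon\phi_2\rangle$ between microscopically localized vectors, which, because $\epsilon^{-d}W[\rho_0^\epsilon](\epsilon^{-1}\cdot,\cdot)$ has a macroscopic limit $g$, is of size $\mathcal{O}(\epsilon^d)$. The combination of these two cancellation mechanisms delivers the claimed limit; the main obstacle is the careful Dyson bookkeeping, in particular ensuring that every correction term can indeed be brought into one of these two forms.
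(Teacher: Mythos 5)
Your overall architecture — Weyl--Wigner duality, the exact Egorov identity for $H_0$, Duhamel expansion of $V_\epsilon$ around $U_0$, and Riemann--Lebesgue plus dominated convergence to kill the source-part corrections — is essentially the paper's route (the paper packages the Duhamel/Riemann--Lebesgue step as Lemma \ref{osc}, showing directly that $\Vert V_\epsilon(\epsilon^{-1}\tau)\phi-U_0(\epsilon^{-1}\tau)\phi\Vert\to0$, which then gives trace-norm convergence of the rank-one integrand and avoids your double oscillatory integrals; your version is workable but you should be careful that both the Weyl symbol and the oscillatory phase depend on $\epsilon$, so the interchange of limits needs Calder\'on--Vaillancourt uniform bounds and the strong convergence $\operatorname{Op}[\eta^\epsilon_S]\to\operatorname{Op}[\eta^0_S]$ rather than a bare Riemann--Lebesgue statement).

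The genuine gap is in your treatment of the initial-data corrections. You claim each such correction reduces to a matrix element $\left\langle\phi_1,\rho_0^\epsilon\phi_2\right\rangle$ between microscopically localized vectors and is of size $\mathcal{O}(\epsilon^d)$ ``because $\epsilon^{-d}W[\rho_0^\epsilon](\epsilon^{-1}\cdot,\cdot)$ has a macroscopic limit $g$.'' This is false: the hypothesis \eqref{indist} allows $g$ to be any distribution, including a point mass in $X$. Take $\rho_0^\epsilon=P_\phi$ for all $\epsilon$; then $\epsilon^{-d}W[P_\phi](\epsilon^{-1}X,P)\to\delta(X)\vert\hat\phi(P)\vert^2=g$ and $\Vert\rho_0^\epsilon\Vert_{\operatorname{tr}}=1$, yet $\left\langle\phi,\rho_0^\epsilon\phi\right\rangle=1$, not $\mathcal{O}(\epsilon^d)$. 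Distributional convergence of the rescaled Wigner function carries no pointwise smallness of matrix elements against fixed microscopic states. The corrections nevertheless do vanish, but by the same dispersion mechanism as in the source part: after Duhamel they are controlled by
\begin{equation*}
\int_0^T\big\Vert P_\phi\,\mathrm{e}^{-iH_0S/\epsilon+cP_\phi S}\rho_0^\epsilon\big\Vert_{\operatorname{tr}}\,\mathrm{d}S,
\end{equation*}
which tends to zero because the \emph{time-averaged} overlap $\int_0^T\Vert\rho_0^\epsilon\,\mathrm{e}^{iH_0S/\epsilon}\phi\Vert\,\mathrm{d}S$ is $\mathcal{O}(\sqrt{\epsilon})$ for $\phi$ of bounded spectral density (a Plancherel/Cauchy--Schwarz estimate, uniform over the trace-norm-bounded family $\rho_0^\epsilon$), followed by an approximation of general $\phi$ by such states. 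You need to replace your $\mathcal{O}(\epsilon^d)$ argument by this time-averaging estimate; as written, that step of your proof does not survive the example above.
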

(\ref{limitdiff}) is the weak form of equation (\ref{classsource}) with a source term defined through the semiclassical limit of $\phi$.

A natural step would be to include an external potential varying on the macroscopic scale. We leave this as an open problem.
\section{Particle production in the fermionic case}
\label{sectioninfgrowth}
\subsection{Proof of Theorem \ref{infgrowth}}
\begin{proof}
Since we are in the case $\rho_0=0$, $\rho(t)$ is differentiable in trace class with a positive operator as derivative,
\begin{equation}
 \frac{d}{dt}\rho(t)= 2|\lambda| \mathrm{e}^{(-iH+\lambda P_\phi)t} P_\phi \mathrm{e}^{(iH+\lambda P_\phi)t}.
\end{equation}
Thus $\|\rho(t)-\rho(s)\|_{\operatorname{tr}}=|\operatorname{tr}(\rho(t))-\operatorname{tr}(\rho(s))|$, so if the particle number stays bounded and $\operatorname{tr}\left(\rho(t)\right)$ approaches a finite limit from below, $\rho(t)$ converges in trace class as $t$ tends to infinity. As a fermionic density matrix, the limit $\rho_\infty$ obeys $0\leq\rho_\infty\leq \mathbf{1}$ and it stays invariant under the time evolution given by (\ref{formal}),
\begin{equation}
\label{invariance}
 \rho_\infty=\mathbf{1}-\mathrm{e}^{(-iH+\lambda P_\phi)t}\mathrm{e}^{(iH+\lambda P_\phi)t}+\mathrm{e}^{(-iH+\lambda P_\phi)t}\rho_\infty \mathrm{e}^{(iH+\lambda P_\phi)t}.
\end{equation}
 Furthermore, since $\operatorname{tr}(\rho(t))$ converges to a finite limit, its (monotonously decreasing) time derivative tends to zero, and therefore,
\begin{equation}
\begin{split}
 \left\langle\phi,(\mathbf{1}-\rho_\infty)\phi \right\rangle&=\lim_{t\rightarrow\infty}\left\langle\phi,(\mathbf{1}-\rho(t))\phi \right\rangle=\lim_{t\rightarrow\infty}\Vert \mathrm{e}^{(iH+\lambda P_\phi)t} \phi\Vert^2\\
&=\lim_{t\rightarrow\infty}\Vert \mathrm{e}^{(-iH+\lambda P_\phi)t} \phi\Vert^2=\lim_{t\rightarrow\infty}(2|\lambda|)^{-1} \frac{d}{dt}\operatorname{tr}(\rho(t))=0.
\end{split}
\end{equation}

\noindent{}Since $\mathbf{1}-\rho_\infty$ is positive, $0=\left\langle\phi,(\mathbf{1}-\rho_\infty)\phi \right\rangle$ can only hold if $(\mathbf{1}-\rho_\infty)\phi=0$. Together with (\ref{invariance}) this yields:
\begin{equation}
\begin{split}
 \frac{d}{dt}\left(\mathrm{e}^{iHt}\rho_\infty \mathrm{e}^{-iHt} \right)&=|\lambda|\mathrm{e}^{iHt}P_\phi \mathrm{e}^{(-iH+\lambda P_\phi)t}(\mathbf{1}-\rho_\infty)\mathrm{e}^{(iH+\lambda P_\phi)t}\mathrm{e}^{-iHt} \\
&\hspace{4mm}+|\lambda|\mathrm{e}^{iHt} \mathrm{e}^{(-iH+\lambda P_\phi)t}(\mathbf{1}-\rho_\infty) \mathrm{e}^{(iH+\lambda P_\phi)t}P_\phi \mathrm{e}^{-iHt}\\
&=|\lambda| \mathrm{e}^{iHt}\left(P_\phi (\mathbf{1}-\rho_\infty)+ (\mathbf{1}-\rho_\infty)P_\phi \right) \mathrm{e}^{-iHt}=0.
\end{split}
\end{equation}
Therefore, we have for $t\in\mathbb{R}$
\begin{equation}
\label{similar}
 \mathrm{e}^{iHt}\rho_\infty \mathrm{e}^{-iHt}=\rho_\infty.
\end{equation}
Now consider the eigenspace $V$ of $\rho_\infty$ corresponding to the eigenvalue $1$. $V$ has a positive, but finite dimension because $\phi\in V$ and $\rho_\infty$ is trace class, and due to (\ref{similar}), $V$ is invariant under the action of the group $\mathrm{e}^{-iHt}$. So since $V$ is finite-dimensional, we have $V\subset\mathcal{D}(H)$ and $V$ has a orthonormal basis of eigenvectors of $H$, and $\phi$ can be written as a linear combination of those.

Conversely, assume that $\phi=\sum_{n=1}^Nc_n\psi_n$ with $\|\psi_n\|=1$, $H\psi_n=\omega_n\psi_n$, with all $c_n\neq0$ and all $\omega_n$ different. $-iH+\lambda P_\phi$ is a bounded operator on $W:=\operatorname{span}(\psi_1,...,\psi_n)$, so $\mathrm{e}^{(-iH+\lambda P_\phi)t}\phi$ can be calculated by the power series, to see that 
\begin{equation}
\rho(t)= 2|\lambda|\int_0^t \mathrm{e}^{(-iH+\lambda P_\phi)s} P_\phi \mathrm{e}^{(iH+\lambda P_\phi)s}\mathrm{d}s.
\end{equation}
contains only states from $W$, which, together with the fermionic property, yields the bound $\operatorname{tr}(\rho(t))\leq N$. Thus $\rho_\infty$ exists and, by (\ref{similar}) commutes with $H$ on $W$, so that they have a common basis of eigenvectors, which necessarily means
\begin{equation}
 \rho_\infty=\sum_{n=1}^Nb_n|\psi_n \left\rangle\right\langle\psi_n|.
\end{equation}
Now it only remains to check that $\rho_\infty=\mathbf{1}_W$, which follows from
\begin{equation}
 \sum_{n=1}^Nc_n\psi_n=\phi=\rho_\infty\phi=\sum_{n=1}^Nc_n b_n\psi_n.
\end{equation}
and the fact that all $c_n\neq0$. So $\operatorname{tr}(\rho_\infty)=\operatorname{dim}(W)=N$.
\end{proof}

\subsection{Proof of Theorem \ref{linsublin}}
\begin{proof}
For a state $\phi\in\mathcal{H}_{\mathrm{pp}}$ we have to show that 
\begin{equation}
 \lim_{t\rightarrow\infty}\frac{d}{dt}\operatorname{tr}(\rho(t))= \lim_{t\rightarrow\infty}2|\lambda|\Vert \mathrm{e}^{(-iH+\lambda P_\phi)t} \phi\Vert^2=0.
\end{equation}
To this end, expand $\phi$ in a basis of eigenvectors of $H$,
\begin{equation}
  \phi=\sum_{n=1}^\infty c_n\psi_n
\end{equation}
with $\|\psi_n\|=1$, $H\psi_n=\omega_n\psi_n$. Since 
\begin{equation}
\label{convA}
\begin{split}
 \mathrm{e}^{(iH+\lambda P_\phi)t}\mathrm{e}^{(-iH+\lambda P_\phi)t}&\geq0,\\
\frac{d}{dt}\mathrm{e}^{(iH+\lambda P_\phi)t}\mathrm{e}^{(-iH+\lambda P_\phi)t}&=-2|\lambda|\mathrm{e}^{(iH+\lambda P_\phi)t}P_\phi \mathrm{e}^{(-iH+\lambda P_\phi)t}\leq0,
\end{split}
\end{equation}
polarization implies that $\mathrm{e}^{(iH+\lambda P_\phi)t}\mathrm{e}^{(-iH+\lambda P_\phi)t}$ converges weakly to a bounded, positive, selfadjoint operator $A$ with
\begin{equation}
\label{invarA}
 A=\mathrm{e}^{(iH+\lambda P_\phi)t}A\mathrm{e}^{(-iH+\lambda P_\phi)t}.
\end{equation}
As the $\psi_n$ are eigenvectors of $H$, 
\begin{equation} 
 \left\langle\psi_n,\mathrm{e}^{-iHt}A\mathrm{e}^{iHt}\psi_n \right\rangle= \left\langle\psi_n,A\psi_n \right\rangle.
\end{equation}
The last two equations prove
\begin{equation}
\begin{split}
 0&=\frac{d}{dt}\left\langle\psi_n,\mathrm{e}^{-iHt}A\mathrm{e}^{iHt}\psi_n \right\rangle\\
&=\lambda\left\langle\psi_n,\mathrm{e}^{-iHt}(P_\phi A+AP_\phi) \mathrm{e}^{iHt}\psi_n \right\rangle\\
&=\lambda\left\langle\psi_n,(P_\phi A+AP_\phi)\psi_n \right\rangle.
\end{split}
\end{equation}
Taking the sum over all $n$, one obtains $\left\langle\phi,A \phi\right\rangle=0$, (even $A\phi=0$). By the definition of $A$, this means
\begin{equation}
 \lim_{t\rightarrow0}\Vert \mathrm{e}^{(-iH+\lambda P_\phi)t} \phi\Vert^2=\lim_{t\rightarrow0}\left\langle\phi, \mathrm{e}^{(iH+\lambda P_\phi)t}\mathrm{e}^{(-iH+\lambda P_\phi)t}\phi \right\rangle =\left\langle\phi,A \phi\right\rangle=0,
\end{equation}
which proves the first assertion.\par
For the proof of the second part of the theorem, assume that there is a source state $\phi$ with $P_\mathrm{ac}\phi\neq0$ and a $\lambda<0$ such that the number of particles grows sublinearly in time, i.e.
\begin{equation}
 h(t)=\left\Vert\mathrm{e}^{(-iH+\lambda P_\phi)t}\phi\right\Vert=\left\Vert\mathrm{e}^{(iH+\lambda P_\phi)t}\phi\right\Vert\rightarrow0\hspace{5mm}(t\rightarrow\infty).
\end{equation}
By the same arguments as in (\ref{convA}), we have the weak convergence
\begin{equation}
 \mathrm{e}^{(iH+\lambda P_\phi)t}\mathrm{e}^{(-iH+\lambda P_\phi)t}\rightarrow A 
\end{equation}
with $A$ having the invariance property (\ref{invarA}), and, by assumption,
\begin{equation}
 A\phi=0.
\end{equation}
Following the proof of Theorem \ref{infgrowth}, one can deduce from (\ref{invarA}) and $A\phi=0$ that 
\begin{equation}
\label{invarAunit}
 A=\mathrm{e}^{iHt}A\mathrm{e}^{-iHt}
\end{equation}
for all $t$. Thus $A$ and $H$ commute, and so do $A$ and all operators given by the functional calculus for $H$,
\begin{equation}
\label{commuteA}
 Af(H)=f(H)A,\hspace{5mm}
\end{equation}
for all bounded Borel functions $f$ on $\mathbb{R}$. We will now apply this fact to a certain choice of $f$.
For $\psi=P_\mathrm{ac}\phi\neq0$, we have $\mu_\psi(\mathrm{d}E)=\rho(E)\mathrm{d}E$ with a positive $L^1(\mathbb{R})$ function $\rho$. Defining the Borel function
\begin{equation}
 f_c=\mathbf{1}{\{\rho\leq c\}}\cdot\mathbf{1}\left( {\mathbb{R}\setminus\mathrm{supp}(\mu_\mathrm{pp})}\right) \cdot\mathbf{1}\left( {\mathbb{R}\setminus\mathrm{supp}(\mu_\mathrm{sing})}\right) 
\end{equation}
for $c>0$, we have
\begin{equation}
 \psi_c=f_c(H)\phi\rightarrow\psi\hspace{5mm}(c\rightarrow\infty)
\end{equation}
and
\begin{equation}
 A\psi_c=Af_c(H)\phi=f_c(H)A\phi=0.
\end{equation}
Thus for $c$ fixed large enough, $\psi_c\neq0$ and 
\begin{equation}
\label{tozero}
 \lim_{t\rightarrow\infty}\left\Vert\mathrm{e}^{(-iH+\lambda P_\phi)t}\psi_c\right\Vert^2=\langle\psi_c,A\psi_c\rangle=0.
\end{equation}
Considering the semigroup as a perturbation of the unitary group, we can write
\begin{equation}
 \mathrm{e}^{(-iH+\lambda P_\phi)t}\psi_c=\mathrm{e}^{-iHt}\psi_c+\lambda\int_0^t\mathrm{e}^{(-iH+\lambda P_\phi)(t-s)}\phi\left\langle\phi,\mathrm{e}^{-iHs}\psi_c \right\rangle\mathrm{d}s.
\end{equation}
By the choice of $\psi_c$, the scalar product reads
\begin{equation}
 \left\langle\phi,\mathrm{e}^{-iHs}\psi_c \right\rangle=\int_\mathbb{R}\mathbf{1}{\{\rho\leq c\}}\rho(E)e^{iEs}\mathrm{d}E,
\end{equation}
and therefore is an $L^2(\mathbb{R}_s)$ function with norm not larger than $\sqrt{2\pi c}$. Choosing $\epsilon>0$ and $T>0$ such, that 
\begin{equation}
 \int_T^\infty|\left\langle\phi,\mathrm{e}^{-iHs}\psi_c \right\rangle|^2\mathrm{d}s<\epsilon
\end{equation}
we can split the integral in two parts to see for all $t\geq T$
\begin{equation}
\label{4.28}
\begin{split}
 \left\Vert\mathrm{e}^{(-iH+\lambda P_\phi)t}\psi_c\right\Vert\geq&\left\Vert\psi_c\right\Vert-|\lambda|\int_0^Th(t-s)\left|\left\langle\phi,\mathrm{e}^{-iHs}\psi_c \right\rangle\right|\mathrm{d}s\\&\hspace{5mm}-|\lambda|\sup_{\chi\in\mathcal{H}, \Vert\xi\Vert=1}\int_T^t\left|\langle\chi,\mathrm{e}^{(-iH+\lambda P_\phi)(t-s)}\phi\rangle\left\langle\phi,\mathrm{e}^{-iHs}\psi_c \right\rangle\right|\mathrm{d}s.
\end{split}
\end{equation}
Now actually both scalar products in the second integral are $L^2$ functions, since
\begin{equation}
 \frac{d}{ds} \Vert \mathrm{e}^{(iH+\lambda P_\phi)s}\chi\Vert^2=2\lambda\left|\left\langle\phi, \mathrm{e}^{(iH+\lambda P_\phi)s}\chi\right\rangle\right|^2
\end{equation}
and thus
\begin{equation}
\label{L2est}
 \int_0^\infty\left|\left\langle\phi, \mathrm{e}^{(iH+\lambda P_\phi)s}\chi\right\rangle\right|^2\mathrm{d}s=\frac{\Vert\chi\Vert^2-\lim_{t\rightarrow\infty}\Vert \mathrm{e}^{(iH+\lambda P_\phi)t}\chi\Vert^2}{2|\lambda|}\leq\frac{\Vert\chi\Vert^2}{|2\lambda|}.
\end{equation}
Therefore an application of the Cauchy-Schwarz inequality to (\ref{4.28}) shows
\begin{equation}
\left\Vert\mathrm{e}^{(-iH+\lambda P_\phi)t}\psi_c\right\Vert\geq\left\Vert\psi_c\right\Vert-|\lambda|\int_0^Th(t-s)\left|\left\langle\phi,\mathrm{e}^{-iHs}\psi_c \right\rangle\right|\mathrm{d}s-\sqrt{\frac{|\lambda|\epsilon}{2}}.
\end{equation}
For fixed $T$, the integral in the last line tends to zero as $t\rightarrow\infty$ by dominated convergence and the assumption that $h(t)\rightarrow0$. Since one can take $\epsilon$ arbitrarily small, this would imply
\begin{equation}
 \lim_{t\rightarrow\infty}\left\Vert\mathrm{e}^{(-iH+\lambda P_\phi)t}\psi_c\right\Vert\geq\left\Vert\psi_c\right\Vert>0,
\end{equation}
contradicting (\ref{tozero}). Thus sublinear growth is not possible for source states $\phi$ with $P_\mathrm{ac}\phi\neq0$.
\end{proof}
\section{Linear growth}
\label{sectionlingrowth}
\subsection{Proof of Theorem \ref{lingrowth}}
\begin{proof}
For $t>0$ define $h(t)=\left\Vert\mathrm{e}^{(-iH+\lambda P_\phi)t}\phi\right\Vert$. Considering the nonunitary time evolution as a perturbation of the unitary group, one can write
\begin{equation}
 \mathrm{e}^{(-iH+\lambda P_\phi)t}\phi=\mathrm{e}^{-iHt}\phi+\lambda\int_0^t\mathrm{e}^{(-iH+\lambda P_\phi)(t-s)}\phi\left\langle\phi,\mathrm{e}^{-iHs}\phi \right\rangle\mathrm{d}s.
\end{equation}
For $0<\lambda<\tau^{-1}$, the monotonicity of $h(t)$ implies
\begin{equation}
 \begin{split}
  h(t)&\leq1+\lambda\tau h(t)\\
h(t)&\leq(1-\lambda\tau)^{-1},
 \end{split}
\end{equation}
so $\lim_{t\rightarrow\infty}h(t)$ exists and is bounded by $(1-\lambda\tau)^{-1}$. For $\lambda<0$, $\lim_{t\rightarrow\infty}h(t)=h_\infty$ exists since $h(t)$ decreases monotonically, and dominated convergence implies the estimate
\begin{equation}
 \begin{split}
  h_\infty&\geq1-|\lambda|\tau h_\infty\\
h_\infty&\geq(1-\lambda\tau)^{-1}.
 \end{split}
\end{equation}
Setting $\rho_0=0$ in (\ref{trg}) and using the unitarity of $\mathrm{e}^{iHt}$, this implies the existence of $\lim_{t\rightarrow\infty}\frac{d}{dt}N(t)$ and the estimates
\begin{equation}
 \lim_{t\rightarrow\infty}\frac{d}{dt}N(t)\leq\frac{2|\lambda|}{\left( {1-\lambda\tau}\right) ^2} \hspace{5mm}(\lambda>0)
\end{equation}
and
\begin{equation}
 \lim_{t\rightarrow\infty}\frac{d}{dt}N(t)\geq\frac{2|\lambda|}{\left( {1-\lambda\tau}\right) ^2}\hspace{5mm}(\lambda<0).
\end{equation}
The other bounds follow from (\ref{trg}) and (\ref{mono}).

\end{proof}

\subsection{Proof of Theorem \ref{ratetozero}}
For $\lambda<0$, $\phi\in\mathcal{D}(H)$, one has
\begin{equation}
 \frac{d}{dt}\mathrm{e}^{(iH-\lambda P_\phi)t}\mathrm{e}^{\lambda P_\phi t}\phi=ie^{\lambda t}H\phi,
\end{equation}
and thus
\begin{equation}
 \mathrm{e}^{(-iH+\lambda P_\phi)t}\phi=e^{\lambda t}\phi-i\int_0^te^{\lambda s}\mathrm{e}^{(-iH+\lambda P_\phi)(t-s)}H\phi\mathrm{d}s.
\end{equation}
Therefore,
\begin{equation}
 \left\Vert\mathrm{e}^{(-iH+\lambda P_\phi)t}\phi\right\Vert\leq e^{-|\lambda|t}+\frac{1}{|\lambda|}\rightarrow\frac{1}{|\lambda|}\hspace{3mm}(t\rightarrow\infty),
\end{equation}
so that
\begin{equation}
 \lim_{t\rightarrow\infty}\left(\frac{d}{dt}N_\lambda(t)\right)\leq\frac{2}{|\lambda|},
\end{equation}
which is exactly the $|\lambda|^{-1}$ suggested by (\ref{growthrates}).\\
To see why this theorem does not hold for all $\phi\notin\mathcal{D}(H)$, consider the Hilbert space $\mathcal{H}=L^2\left(\mathbb{R}_x\right)$ with the multiplication operator $H=x$ and the source state $\phi$ with $\phi(x)=\frac{1}{\sqrt{\pi(1+x^2)}}$.
In this case, the simple form of the overlap $\langle\phi,\mathrm{e}^{-iHt}\phi\rangle=e^{-|t|}$
allows for an explicit representation
\begin{equation}
\label{explicit}
\left(\mathrm{e}^{(-iH+\lambda P_\phi)t}\phi\right)(x)=e^{-ixt}\phi(x)\left(\lambda\frac{e^{(\lambda-1+ix)t}-1}{\lambda-1+ix}+1\right)
\end{equation}
for all $t\geq0$. As $t\rightarrow\infty$, for $\lambda<\tau=1$, $e^{iHt}\phi$ strongly converges to the limit function
\begin{equation}
 \phi(x)\cdot\frac{-1+ix}{\lambda-1+ix}
\end{equation}
with norm $1/\sqrt{1-\lambda}$, yielding the limit 
\begin{equation}
 \lim_{\lambda\rightarrow-\infty}\lim_{t\rightarrow\infty}\left(\frac{d}{dt}N_\lambda(t)\right)=\lim_{\lambda\rightarrow-\infty}\frac{2|\lambda|}{|1-\lambda|}=2,
\end{equation}
which is a positive saturation value for the particle flux. Also note that for the critical coupling constant $\lambda=1$,
\begin{equation}
 \left(\mathrm{e}^{(-iH+P_\phi)t}\phi\right)(x)=e^{-ixt}\phi(x)\left(1+\frac{e^{ixt}-1}{ix}\right)
\end{equation}
with
\begin{equation}
 \left\Vert\mathrm{e}^{(-iH+P_\phi)t}\phi\right\Vert^2=1+2t
\end{equation}
and thus $N_1(t)=t^2+t$. So in this case, the transition region implied by the theorems \ref{lingrowth} and \ref{pointspec} really consists only of the critical point $\lambda=\tau=1$, with quadratical growth of the particle number at the transition point itself, and exponential growth behaviour of the form $e^{(\lambda-1)t}$ for larger values of $\lambda$.

\section{Exponential growth}
\label{sectionexp}
\subsection{Proof of Theorem \ref{pointspec}}
\begin{proof}
It suffices to show that for any choice $H$, $\phi$ and $0<\epsilon<\frac{1}{2}$ the operator
\begin{equation}
 \frac{iH}{\lambda}+P_\phi
\end{equation}
has an eigenvalue inside the open ball $B_\epsilon(1)$ for sufficiently large $\lambda$. The idea of the proof is, that, given a closed operator $A$ on a Banach space $\mathcal{B}$, and a closed curve $\Gamma$ in the resolvent set of $A$ such that it separates the spectrum of $A$ in two parts $\sigma_1$ inside and $\sigma_2$ outside the curve, the operator defined by
\begin{equation}
 P=P_{A,\Gamma}=\frac{1}{2\pi i}\int_\Gamma(z-A)^{-1}\mathrm{d}z
\end{equation}
is a (non-orthogonal) projection with the properties 
\begin{equation}
\begin{split}
 PA&\subset AP\hspace{12mm}\mbox{with}\hspace{5mm}\sigma\left( A|P\mathcal{B}\right)=\sigma_1,\\
(\mathbf{1}-P)A&\subset A(\mathbf{1}-P)\hspace{5mm}\mbox{with}\hspace{5mm}\sigma\left( A|(\mathbf{1}-P)\mathcal{B}\right)=\sigma_2.
\end{split}
\end{equation}
Now an application of the Neumann series shows that if $B$ is a perturbation of $A$ with $\|B(A-z)^{-1}\|<1$ for all $z\in \Gamma$, one has
\begin{equation}
 \operatorname{dim}(P_{A,\Gamma}\mathcal{B})=\operatorname{dim}(P_{A+B,\Gamma}\mathcal{B}).
\end{equation}
In particular, if the spectrum of $A$ enclosed by $\Gamma$, $\sigma_1(A)$, consists of one eigenvalue of single multiplicity, then so does $\sigma_1(A+B)$. The details are given in Chapter IV, \S3 of \cite{kato}.\\
\indent{}Setting $\Gamma=\partial B_\epsilon(1)$, we apply this theorem twice, first with $\mathcal{B}=\mathcal{H}_E$, a suitable ``low energy'' subspace of $\mathcal{H}$, and then with $\mathcal{B}=\mathcal{H}$ to remove the energy cutoff again.\\
\indent{}Since $H$ is selfadjoint, one can define the spectral projection $Q_E=1_{[-E,E]}(H)$ by the functional calculus for any $E\geq0$, and write $\phi_E=Q_E\phi$. Since $Q_E\rightarrow\mathbf{1}_\mathcal{H}$ strongly as $E\rightarrow\infty$, there exists an $E>0$ with $\big\Vert\phi-\phi_E\big\Vert<\epsilon/8$. For this choice of $E$, take $\mathcal{H}_E=Q_E\mathcal{H}$ and $\lambda>{4E}/{\epsilon}$. Writing $P_{\phi_E}=|\phi_E\left\rangle \right\langle \phi_E |$, we first consider the operator
\begin{equation}
\label{ecop}
 \frac{iH}{\lambda}+P_{\phi_E}
\end{equation}
restricted to $\mathcal{H}_E$. $P_{\phi_E}$ has exactly one eigenvalue in $B_\epsilon(1)$, which is $\Vert\phi_E\Vert^2\in[1-\epsilon/4,1]$, and thus we have the estimate
\begin{equation}
 \max_{z\in\Gamma}\big\Vert iH\left(P_{\phi_E}-z\right)^{-1}/\lambda\big\Vert\leq\frac{E}{\lambda}\cdot\frac{4}{3\epsilon}<\frac{1}{3}.
\end{equation}
Therefore, the above-mentioned perturbation result applies, and the operator from (\ref{ecop}) has one single eigenvalue in $B_\epsilon(1)$. Furthermore, one can estimate its resolvent by the Neumann series to obtain
\begin{equation}
\label{resolv1}
\begin{split}
 \max_{z\in\Gamma}\big\Vert\left(iH/\lambda+P_{\phi_E}-z\right)^{-1}\big\Vert&\leq\max_{z\in\Gamma}\big\Vert\left( \mathbf{1}_{\mathcal{H}_E}+iH\left(P_{\phi_E}-z\right)^{-1}/\lambda\right)^{-1}\big\Vert\\&\hspace{6mm}\cdot\big\Vert\left(P_{\phi_E}-z\right)^{-1}\big\Vert\\&<(1-1/3)^{-1}\cdot4/(3\epsilon)={2}/{\epsilon}
\end{split}
\end{equation}
on the restricted space $\mathcal{H}_E$. On the orthogonal complement $\mathcal{H}_E^\perp$, we have ${iH}/{\lambda}+P_{\phi_E}={iH}/{\lambda}$, so this operator has purely imaginary spectrum, and its resolvent can easily be estimated by
\begin{equation}
 \max_{z\in\Gamma}\big\Vert\left(iH/\lambda+P_{\phi_E}-z\right)^{-1}\big\Vert=\max_{z\in\Gamma}\big\Vert\left(iH/\lambda-z\right)^{-1}\big\Vert\leq\frac{1}{1-\epsilon}<2.
\end{equation}
Since the operator ${iH}/{\lambda}+P_{\phi_E}$ on $\mathcal{H}$ is decomposed by the pair $\mathcal{H}_E,\mathcal{H}_E^\perp$ as described in \cite{kato}, Chapter III., \S5.6, one can simply combine the results concerning the subspaces, to obtain that the operator on the whole space still has one eigenvalue in $B_\epsilon(1)$ and obeys the estimate 
\begin{equation}
 \max_{z\in\Gamma}\big\Vert\left(iH/\lambda+P_{\phi_E}-z\right)^{-1}\big\Vert<{2}/{\epsilon}.
\end{equation}
Now we are ready for the second application of the perturbation result. As $\big\Vert P_\phi-P_{\phi_E}\big\Vert\leq2\Vert\phi-\phi_E\Vert<{\epsilon}/{4}$, we have
\begin{equation}
 \max_{z\in\Gamma}\big\Vert\left( P_\phi-P_{\phi_E}\right) \left(iH/\lambda+P_{\phi_E}-z\right)^{-1}\big\Vert<\frac{\epsilon}{4}\cdot\frac{2}{\epsilon}=\frac{1}{2}.
\end{equation}
As a consequence, ${iH}/{\lambda}+P_{\phi_E}+\left(P_\phi-P_{\phi_E}\right)= {iH}/{\lambda}+P_\phi$ has one eigenvalue of mulitiplicity one in $B_\epsilon(1)$.
\end{proof}

\subsection{Proof of Theorem \ref{alllambda}}
\begin{proof}
 The polynomial
\begin{equation}
 p(z):=z^4+2iz^3+(\lambda i-2)z^2-(2\lambda+i)z-(3/2)\lambda i 
\end{equation}
has a root with positive imaginary part for all $\lambda>0$. To show this let $z_k$, $k=1,...,4$ be its roots, counting mulitplicities. For all $z$ with $p(z)\neq0$, the product rule implies
\begin{equation}
 \frac{\overline{p(z)}p'(z)}{|p(z)|^2}=\frac{p'(z)}{p(z)}=\sum_{k=1}^4\frac{1}{z-z_k}
\end{equation}
 so if all $\operatorname{Im}z_k\leq0$, the right side, and therefore also $\overline{p(z)}p'(z)$ would never have a positive imaginary part for $z\in\mathbb{R}$. But
\begin{equation}
 \operatorname{Im}\left( \overline{p(\sqrt{\lambda/2})}p'(\sqrt{\lambda/2})\right) =\lambda/2+o(\lambda), \hspace{1cm}(\lambda\rightarrow0),
\end{equation}
 so $p$ must have a root with positive imaginary part for small positive $\lambda$. Furthermore, $p$ does not have a real root for any positive $\lambda$, because one can easily check that the real and imaginary part of 
\begin{equation}
 p(t)\cdot(t^3-2it^2-2t+i)=(t^6+1)\cdot t+3\lambda/2+\lambda(2it^5+it^3+2it)/2
\end{equation}
cannot equal zero for the same $t\in\mathbb{R}$. Since the roots of $p$ depend continuously on $\lambda$, this means that there is a $z_0$ with $\operatorname{Im}z_0>0$ and $p(z_0)=0$ for all $\lambda>0$.\\
The vector $\psi\in\mathcal{D}(H_0)$ which we will prove to be a (non-normalized) eigenvector is 
\begin{equation}
 \psi:=\left( H_0-{z_0^2}\right)^{-1}\phi. 
\end{equation}
Taking $C_R=\{|z|=R, \operatorname{Im}z\geq0\}\cup(-R,R)$ as contour of integration, one has by Cauchy's Integral theorem and by the fact that $\operatorname{Im}z_0>0$
\begin{equation}
\begin{split}
 \left\langle \phi, \psi\right\rangle &=\frac{3}{2\pi}\int_{\mathbb{R}}\frac{1}{(k^6+1)\left( {k^2}-{z_0^2}\right)}\mathrm{d}k\\&=\lim_{R\rightarrow\infty}\frac{3}{2\pi}\int_{C_R}\frac{1}{(z^6+1)\left( {z^2}-{z_0^2}\right)}\mathrm{d}z\\
&=i\frac{iz_0^2-2z_0-3i/2}{z_0^4+2iz_0^3-2z_0^2-iz_0}\\&=-\frac{i}{\lambda},
\end{split}
\end{equation}
where we have used $p(z_0)=0$ in the last line.
Therefore, $\psi$ is an eigenvector:
\begin{equation}
 \left( iH_0+\lambda P_\phi\right) \psi=iH_0\left( H_0-\frac{z_0^2}{2}\right)^{-1}\phi+\lambda\left\langle \phi, \psi\right\rangle\phi=i\phi+i\frac{z_0^2}{2}\psi-i\phi=\alpha(\lambda)\psi.
\end{equation}
Since $H_0$ has no eigenvectors, we have $P_\phi\psi\neq0$, and thus
\begin{equation}
 \|\psi\|^2\operatorname{Re}\alpha(\lambda)=\operatorname{Re}\left\langle\psi, \left( iH_0+\lambda P_\phi\right) \psi\right\rangle =\lambda|\left\langle\phi,\psi \right\rangle|^2>0.
\end{equation}
\end{proof}

\section{Restricted limits} 
\label{sectionrestrict}
\subsection{Independence (proof of Theorem \ref{independence})}
For the proof of Theorem \ref{independence}, we start with two lemmas:
\begin{lem}
\label{weak}
 For $\lambda<0$, $\phi$ an arbitrary normalized source state, 
\begin{equation}
 \mathrm{e}^{(-iH_0+\lambda P_\phi)t}\rightarrow0
\end{equation}
in the weak operator topology as $t\rightarrow\infty$. 
\end{lem}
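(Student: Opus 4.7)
My plan is to split $\mathcal{H}$ into the $H_0$-cyclic subspace $\mathcal{K}_\phi:=\overline{\operatorname{span}}\{f(H_0)\phi:f\text{ bounded Borel}\}$ generated by $\phi$ and its orthogonal complement $\mathcal{K}_\phi^\perp$. Both are closed and invariant under $H_0$; moreover, since $\phi\in\mathcal{K}_\phi$ gives $P_\phi\mathcal{K}_\phi^\perp=\{0\}$, both are invariant under $P_\phi$ as well, and hence under the group $U(t):=\mathrm{e}^{t(-iH_0+\lambda P_\phi)}$. On $\mathcal{K}_\phi^\perp$ the source term vanishes, so $U(t)=\mathrm{e}^{-iH_0 t}$, and since $H_0=-\Delta$ has purely absolutely continuous spectrum, the Riemann--Lebesgue lemma applied to $\langle\chi,\mathrm{e}^{-iH_0 t}\psi\rangle=\int\mathrm{e}^{-itE}\,\mathrm{d}\mu_{\chi,\psi}(E)$ gives weak convergence to zero. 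The problem therefore reduces to $\psi\in\mathcal{K}_\phi$.

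On $\mathcal{K}_\phi$ I would proceed by compactness. Contractivity of $U(t)$ for $\lambda<0$ keeps the orbit $\{U(t)\psi\}_{t\geq0}$ bounded, so by Banach--Alaoglu any $t_{n_k}\to\infty$ admits a further subsequence along which $U(t_{n_k})\psi\rightharpoonup\eta$, and it suffices to show $\eta=0$. Boundedness of each $U(s)$ yields $U(t_{n_k}+s)\psi\rightharpoonup U(s)\eta$ for every fixed $s\geq 0$. Combining this pointwise weak convergence with the $L^2$-bound
\begin{equation*}
\int_0^\infty\big|\langle\phi,U(s)\psi\rangle\big|^2\,\mathrm{d}s\leq\frac{\|\psi\|^2}{2|\lambda|},
\end{equation*}
a direct consequence of (\ref{L2est}), and Fatou's lemma on an arbitrary $[0,T]$, the continuous function $s\mapsto\langle\phi,U(s)\eta\rangle$ must vanish identically on $[0,\infty)$. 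The Duhamel identity
\begin{equation*}
U(s)\eta=\mathrm{e}^{-iH_0 s}\eta+\lambda\int_0^s\mathrm{e}^{-iH_0(s-r)}\phi\,\langle\phi,U(r)\eta\rangle\,\mathrm{d}r
\end{equation*}
then collapses to $U(s)\eta=\mathrm{e}^{-iH_0 s}\eta$ for $s\geq0$.

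It remains to promote the vanishing of $\langle\phi,\mathrm{e}^{-iH_0 s}\eta\rangle$ on the half-line $s\geq 0$ to a spectral orthogonality statement. By the spectral theorem,
\begin{equation*}
F(s):=\langle\phi,\mathrm{e}^{-iH_0 s}\eta\rangle=\int_{[0,\infty)}\mathrm{e}^{-isE}\,\mathrm{d}\mu_{\phi,\eta}(E),
\end{equation*}
where $\mu_{\phi,\eta}$ is a finite complex measure supported on $\sigma(H_0)=[0,\infty)$. Because the spectral support is a half-line, the same integral extends $F$ to a bounded analytic function on the lower half-plane $\{\operatorname{Im}s<0\}$, and $H^\infty$ boundary uniqueness (F.~and M.~Riesz) forces $F\equiv 0$ on all of $\mathbb{R}$. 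Fourier uniqueness then gives $\mu_{\phi,\eta}=0$, i.e.\ $\langle f(H_0)\phi,\eta\rangle=0$ for every bounded Borel $f$, which means $\eta\perp\mathcal{K}_\phi$. But $\mathcal{K}_\phi$ is weakly closed and each $U(t_{n_k})\psi$ lies in $\mathcal{K}_\phi$, so $\eta\in\mathcal{K}_\phi$ as well, forcing $\eta=0$.

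The principal obstacle is the boundary uniqueness step: the semigroup identity alone gives vanishing only on the half-line $s\geq 0$, and the mechanism propagating this to all of $\mathbb{R}$ is the Hardy-space rigidity coming from the one-sided spectral support $H_0\geq 0$. The remaining ingredients---invariance of $\mathcal{K}_\phi$, the Fatou argument against (\ref{L2est}), and the Duhamel collapse---are essentially routine.
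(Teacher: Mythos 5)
Your argument is correct, but it is genuinely different from the paper's. The paper stays entirely inside the Duhamel formula: the free term $\langle\psi,\mathrm{e}^{-iH_0t}\chi\rangle$ dies by Riemann--Lebesgue, and the source term is handled by picking $\psi$ from the dense set $\mathcal{K}(H_0)$ of Perry's Theorem 1.3, for which $\langle\psi,\mathrm{e}^{iH_0t}\phi\rangle\in L^2(\mathbb{R}_t)$; combined with the dissipation bound (\ref{L2est}) the integral becomes a convolution of two $L^2$ functions and hence vanishes at infinity, and uniform boundedness of the contraction semigroup extends this to all $\psi$. You instead run a LaSalle-type argument: decompose into the cyclic subspace of $\phi$ and its complement, extract weak limit points $\eta$ of the orbit by Banach--Alaoglu, kill $\langle\phi,U(s)\eta\rangle$ via Fatou against the same dissipation bound (\ref{L2est}), and then upgrade the one-sided vanishing of $s\mapsto\langle\phi,\mathrm{e}^{-iH_0s}\eta\rangle$ to $s\in\mathbb{R}$ using that $\sigma(H_0)\subset[0,\infty)$ makes this function the boundary value of a bounded analytic function on a half-plane, so Privalov/Riesz boundary uniqueness forces $\eta\perp\mathcal{K}_\phi$, whence $\eta=0$. (The uniqueness statement you want is the Riesz--Privalov boundary uniqueness theorem for $H^\infty$ rather than the F.~and M.~Riesz theorem on measures, but the needed fact is standard either way.) What each route buys: the paper's proof is shorter but leans on a scattering-theoretic input (existence of Perry's dense set of vectors with $L^2$ time decay), while yours replaces that citation by the semiboundedness of $H_0$ and Hardy-space rigidity; both crucially use the purely absolutely continuous spectrum of $H_0$ and the same $L^2$-in-time dissipation identity, and both are therefore tied to $H_0=-\Delta$ rather than a general $H$. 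All the individual steps you list --- invariance of the two subspaces, weak continuity of the bounded operators $U(s)$, the Fatou argument on tails, the Duhamel collapse, and the weak closedness of $\mathcal{K}_\phi$ --- check out.
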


\begin{proof}
 Let $\psi,\chi\in\mathcal{H}$. We have to show that $\left\langle\psi, \mathrm{e}^{(-iH_0+\lambda P_\phi)t}\chi\right\rangle\rightarrow0$ as $t\rightarrow\infty$. To do so, we write the scalar product as
\begin{equation}
\label{finally}
 \left\langle\psi, \mathrm{e}^{(-iH_0+\lambda P_\phi)t}\chi\right\rangle=\left\langle\psi, \mathrm{e}^{-iH_0t}\chi\right\rangle+\lambda\int_0^t\left\langle\psi,\mathrm{e}^{iH_0(t-s)}\phi\right\rangle\left\langle\phi, \mathrm{e}^{(-iH_0+\lambda P_\phi)s}\chi\right\rangle \mathrm{d}s
\end{equation}
The first term on the right-hand side tends to $0$ as $t\rightarrow\infty$ by the Riemann-Lebesgue Lemma (cf. \cite{perry}, p.\ 18, Example 1.1, note that $\mathcal{H}_{\mathrm{ac}}(H_0)=\mathcal{H}$, i.e. $H_0$ has only absolutely continuous spectrum). The second term will only be analyzed for all $\psi$ from a dense subset of $\mathcal{H}$. Since $\left\langle\psi, \mathrm{e}^{(-iH_0+\lambda P_\phi)t}\chi\right\rangle$ depends continuously on $\psi$ uniformly in $t>0$, this is enough to prove the assertion.\\
\indent{}According to Theorem 1.3 in \cite{perry} (p.\ 20), there is a subset $\mathcal{K}(H_0)$ which is dense in $\mathcal{H}_{a.c.}(H_0)=\mathcal{H}$ such that
\begin{equation}
 \left\langle\psi,\mathrm{e}^{iH_0t}\phi\right\rangle\in L^2(\mathbb{R}_t)
\end{equation}
for all $\psi\in \mathcal{K}(H_0)$. 
Thus, by (\ref{L2est}), the integral on the right side of (\ref{finally}) can be estimated by the convolution of two $L^2(\mathbb{R})$ functions, and therefore tends to $0$ as $t\rightarrow\infty$.
\end{proof}
\begin{lem}
\label{strong}
 For a set $\Omega\subset\mathbb{R}^d$ with finite Lebesgue measure $\mu(\Omega)<\infty$, and $\lambda<0$
\begin{equation}
 P_\Omega \mathrm{e}^{(-iH_0+\lambda P_\phi)t}\rightarrow0
\end{equation}
in strong operator topology as $t\rightarrow\infty$.
\end{lem}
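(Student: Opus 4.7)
The plan is to combine the Duhamel representation of the perturbed semigroup with the dispersive estimate (\ref{kernel}) for the free propagator and the $L^2$-bound on the overlap function established in (\ref{L2est}). Since $\|P_\Omega e^{(-iH_0+\lambda P_\phi)t}\|\leq 1$ uniformly in $t$, by a standard density argument it suffices to prove the strong convergence on a dense subset of $\mathcal{H}$, and I would take $\chi\in L^1\cap L^2(\mathbb{R}^d)$.

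The Duhamel identity reads
\begin{equation*}
 P_\Omega e^{(-iH_0+\lambda P_\phi)t}\chi = P_\Omega e^{-iH_0 t}\chi + \lambda\int_0^t P_\Omega e^{-iH_0(t-s)}\phi\,F(s)\,ds,
\end{equation*}
where $F(s)=\langle\phi,e^{(-iH_0+\lambda P_\phi)s}\chi\rangle$. The free term vanishes in norm because (\ref{kernel}) yields $\|P_\Omega e^{-iH_0t}\chi\|\leq|\Omega|^{1/2}(4\pi t)^{-d/2}\|\chi\|_1\to 0$. For the convolution term I would introduce the scalar kernel $h(u)=\|P_\Omega e^{-iH_0u}\phi\|$; the same dispersive estimate (applied to $\phi$, after approximation in $\mathcal{H}$ if $\phi\notin L^1$) shows $h(u)\to 0$ and, provided $\phi$ is sufficiently regular, $h\in L^2(\mathbb{R}_+)$. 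On the other hand, (\ref{L2est}) delivers $F\in L^2(\mathbb{R}_+)$.

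The next step is to split the convolution integral at an auxiliary threshold $T>0$:
\begin{equation*}
 \int_0^t h(t-s)|F(s)|\,ds = \int_0^T h(t-s)|F(s)|\,ds + \int_T^t h(t-s)|F(s)|\,ds.
\end{equation*}
For fixed $T$ the first piece is bounded by $\sup_{u\geq t-T}h(u)\cdot\sqrt{T}\,\|F\|_{L^2}$, which tends to zero as $t\to\infty$. For the second piece Cauchy--Schwarz gives $\|h\|_{L^2(\mathbb{R}_+)}\|F\|_{L^2[T,\infty)}$, and this becomes arbitrarily small by choosing $T$ large. Letting $t\to\infty$ first and then $T\to\infty$ completes the argument.

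The main technical obstacle is securing $h\in L^2(\mathbb{R}_+)$, i.e.\ the square-integrability of $u\mapsto\|P_\Omega e^{-iH_0 u}\phi\|$. In dimension $d\geq 2$ this follows at once from $h(u)\lesssim u^{-d/2}$ for $\phi\in L^1\cap L^2$; in $d=1$ the free decay is too slow, and one would have to either exploit a Strichartz-type estimate or first establish the lemma on a smaller dense subclass and then propagate it to general $\phi\in\mathcal{H}$ by a careful perturbation argument, choosing an approximating sequence $\phi_n\to\phi$ with $\|\phi-\phi_n\|$ shrinking fast enough to compensate the linear-in-$t$ semigroup perturbation bound.
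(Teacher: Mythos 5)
Your reduction to $\chi\in L^1\cap L^2$ is legitimate (the semigroup is contractive for $\lambda<0$), and the treatment of the free term is fine. The gap is exactly where you locate it, but it is more serious than a technicality confined to $d=1$: your convolution argument needs $h(u)=\Vert P_\Omega \mathrm{e}^{-iH_0u}\phi\Vert$ to lie in $L^2(\mathbb{R}_+)$, and the lemma assumes nothing about $\phi$ beyond $\phi\in\mathcal{H}$, $\Vert\phi\Vert=1$. The dispersive bound $h(u)\lesssim u^{-d/2}$ requires $\phi\in L^1$, and neither of your repairs closes the gap. Approximating $\phi$ by $\phi_n\in L^1\cap L^2$ fails twice over: first, $|h(u)-h_n(u)|\leq\Vert\phi-\phi_n\Vert$ is only a uniform-in-$u$ error, which does not transfer square-integrability on the half-line from $h_n$ to $h$; second, and worse, replacing $\phi$ by $\phi_n$ changes the generator $-iH_0+\lambda P_\phi$ itself, and the difference of the two semigroups at time $t$ is only $\mathcal{O}(t\Vert\phi-\phi_n\Vert)$, so for fixed $n$ the error is unbounded as $t\to\infty$ and the limits $t\to\infty$ and $n\to\infty$ cannot be interchanged. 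In $d=1$ the decay $u^{-1/2}$ is genuinely not square-integrable even for Schwartz $\phi$, and one cannot in general drop the $L^2$ requirement on $h$: a bounded kernel tending to zero convolved with an $L^2$ function need not tend to zero (e.g.\ $h(u)\sim u^{-1/4}$ against $F(s)\sim s^{-1/2-\epsilon}$ diverges). So as written the argument only covers $d\geq2$ with $\phi\in L^1\cap L^2$, which is strictly less than the statement.

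The paper's proof sidesteps all quantitative decay of $h$. It first proves only \emph{weak} convergence $\mathrm{e}^{(-iH_0+\lambda P_\phi)t}\to0$ (Lemma \ref{weak}): there the Duhamel convolution involves the scalar products $\langle\psi,\mathrm{e}^{iH_0(t-s)}\phi\rangle$, which are in $L^2(\mathbb{R}_s)$ for $\psi$ in a dense set (Perry's $\mathcal{K}(H_0)$) — a far weaker input than $h\in L^2$ — paired with the overlap bound (\ref{L2est}); the convolution of two $L^2$ functions vanishes at infinity. It then upgrades weak to strong convergence after composing with $P_\Omega$ by using compactness of $P_\Omega(H_0+i)^{-1}$ and commuting the resolvent through the semigroup on $\mathcal{D}(H_0)$. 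If you want to salvage your direct approach, you would need a Kato-smoothness/local-smoothing estimate $\int_0^\infty\Vert P_\Omega\mathrm{e}^{-iH_0u}\phi\Vert^2\,\mathrm{d}u\leq C\Vert\phi\Vert^2$, which holds for bounded $\Omega$ only in $d\geq3$; the compactness route is what makes the statement dimension-independent.
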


\begin{proof}
 If $q>\frac{d}{2}, q\geq2$, one has $1_{\Omega}(x)\in L^q(\mathbb{R}^d_x)$ and $\left( \frac{p^2}{2}+i\right) ^{-1}\in L^q(\mathbb{R}^d_p)$, and by Theorem XI.20 in \cite{resi3}, (p.\ 47),
\begin{equation}
 P_\Omega\left(H_0+i \right)^{-1}= 1_{\Omega}(x)\left( \frac{(-i\nabla)^2}{2}+i\right) ^{-1}\in\mathcal{J}_q=\left\lbrace A\in\mathcal{B}(\mathcal{H}):\operatorname{tr}\left(|A|^q \right)<\infty. \right\rbrace
\end{equation}
Thus $ P_\Omega\left(H_0+i \right)^{-1}$ is compact. First, take $\psi\in\mathcal{D}(H_0)=H^2(\mathbb{R}^d)$ and apply the fact that a strongly continuous group commutes with its generator,
\begin{equation}
\begin{split}
 P_\Omega \mathrm{e}^{(-iH_0+\lambda P_\phi)t}\psi=P_\Omega&(H_0+i)^{-1}(H_0+i\lambda P_\phi+i-i\lambda P_\phi) \mathrm{e}^{(-iH_0+\lambda P_\phi)t}\psi\\
=P_\Omega&(H_0+i)^{-1}\mathrm{e}^{(-iH_0+\lambda P_\phi)t}(H_0+i\lambda P_\phi+i)\psi\\&-i\lambda P_\Omega(H_0+i)^{-1}P_\phi \mathrm{e}^{(-iH_0+\lambda P_\phi)t}\psi.
\end{split}
\end{equation}
Both terms on the right-hand side tend to $0$ in $\Vert\cdot\Vert_\mathcal{H}$ as $t\rightarrow\infty$. The first one because of Lemma \ref{weak} and the fact that the compact operator $ P_\Omega\left(H_0+i \right)^{-1}$ maps weakly convergent sequences to convergent ones, and the second one directly by Lemma \ref{weak}. Since $\mathcal{D}(H_0)$ is dense in $\mathcal{H}$, this already implies the claim by continuity.
\end{proof}
Now one can easily show that the initial density $\rho_0$ does not contribute to the value of $\lim_{t\rightarrow\infty}\operatorname{tr}(P_\Omega\rho_{_{F}}(t)P_\Omega)$ in the fermionic case.
\begin{proof}\textit{(Theorem \ref{independence})}
 Since $\mathrm{s}-\lim_{t\rightarrow\infty} P_\Omega \mathrm{e}^{(-iH_0+\lambda P_\phi)t}=0$ and $\rho_0\in\mathcal{J}_1$ (the trace class), Lemma 3.1 from \cite{nier} applies, and one has:
\begin{equation}
 \Vert P_\Omega \mathrm{e}^{(-iH_0+\lambda P_\phi)t}\rho_0\mathrm{e}^{(iH_0+\lambda P_\phi)t} P_\Omega\Vert_{\operatorname{tr}}
\leq\Vert P_\Omega \mathrm{e}^{(-iH_0+\lambda P_\phi)t}\rho_0\Vert_{\operatorname{tr}}\rightarrow0
\end{equation}
as $t\rightarrow\infty$.
\end{proof}

\subsection{Existence (proof of Theorems \ref{finlim}, \ref{fermicomp})}
The  main idea for the proofs of Theorems \ref{finlim} and \ref{fermicomp} is the formula
\begin{equation}
\label{timeder}
 \frac{d}{dt}\operatorname{tr}(P_\Omega\rho(t)P_\Omega)=2|\lambda|\Vert P_\Omega \mathrm{e}^{(-iH_0+\lambda P_\phi)t}\phi\Vert^2,
\end{equation}
which is obtained from (\ref{formal}) when setting $\rho_0=0$.
\begin{proof}\textit{(Theorem \ref{finlim})}
 The assumption of sublinear particle production (which is always the case for fermions) implies by (\ref{trg}) and (\ref{mono}), that 
\begin{equation}
\label{L^2}
 \int_0^\infty\vert\left\langle\phi, \mathrm{e}^{(-iH_0+\lambda P_\phi)s}\phi\right\rangle\vert^2\mathrm{d}s<\infty.
\end{equation}
Furthermore,
\begin{equation}
\label{triangle}
 \begin{split}
  \Vert P_\Omega &\mathrm{e}^{(-iH_0+\lambda P_\phi)t}\phi\Vert=\left\| P_\Omega \mathrm{e}^{-iH_0t}\phi+\lambda \int_0^t  P_\Omega \mathrm{e}^{-iH_0(t-s)}\phi \left\langle \phi ,\mathrm{e}^{(-iH_0+\lambda P_\phi)s}\phi\right\rangle \mathrm{d}s\right\|\\
&\leq\Vert P_\Omega \mathrm{e}^{-iH_0t}\phi\Vert+\vert\lambda\vert \int_0^t \Vert P_\Omega \mathrm{e}^{-iH_0(t-s)}\phi \Vert\vert\left\langle \phi ,\mathrm{e}^{(-iH_0+\lambda P_\phi)s}\phi\right\rangle\vert \mathrm{d}s
 \end{split}
\end{equation}
By assumption (\ref{L1}), $\Vert P_\Omega \mathrm{e}^{-iH_0t}\phi\Vert\in L^1\cap L^\infty(\mathbb{R}^+_t)$. Thus, the first term in (\ref{triangle}) obviously is in $L^2(\mathbb{R}^+_t)$, and by (\ref{L^2}), the second one is a convolution of an $L^1(\mathbb{R}^+)$ function with an $L^2(\mathbb{R}^+)$ function, and therefore also contained in $L^2(\mathbb{R}^+_t)$. By (\ref{timeder}), this proves the assertion.
\end{proof}
\begin{proof}\textit{(Theorem \ref{fermicomp})}
 Let $K:=\operatorname{supp}(\hat{\phi})$ be the support of $\hat{\phi}$. Since Fourier transform diagonalizes $H_0$,

\begin{equation}
 \mathrm{e}^{(-iH_0+\lambda P_\phi)t}\phi=\mathrm{e}^{-iH_0t}+\lambda \mathrm{e}^{-iH_0t}\int_0^t \mathrm{e}^{iH_0s}\phi \left\langle \phi, \mathrm{e}^{(-iH_0+\lambda P_\phi)s}\phi \right\rangle \mathrm{d}s
\end{equation}
implies $\operatorname{supp}\left(\mathcal{F} \mathrm{e}^{(-iH_0+\lambda P_\phi)t}\phi\right)\subset K$ for all times $t\geq0$. This means
\begin{equation}
 P_\Omega \mathrm{e}^{(-iH_0+\lambda P_\phi)t}\phi=P_\Omega 1_K(-i\nabla) \mathrm{e}^{(-iH_0+\lambda P_\phi)t}\phi,
\end{equation}
 where $P_\Omega 1_K(-i\nabla)$ is a Hilbert-Schmidt operator with corresponding norm
\begin{equation}
 \Vert P_\Omega 1_K(-i\nabla)\Vert_{\mathcal{J}_2}\leq(2\pi)^{-d/2}\left( |\Omega||K|\right) ^{1/2}
\end{equation}
by theorem XI.20 in \cite{resi3}, p.\ 47. Applying (\ref{L2est}) to a singular value decomposition of $P_\Omega 1_K(-i\nabla)$ one obtains
\begin{equation}
\begin{split}
  \lim_{t\rightarrow\infty}\operatorname{tr}(P_\Omega\rho_{_{F}}(t)P_\Omega)&=2|\lambda|\int_0^\infty\Vert P_\Omega \mathrm{e}^{(-iH_0+\lambda P_\phi)t}\phi\Vert^2\mathrm{d}t\\
&=2|\lambda|\int_0^\infty\Vert P_\Omega 1_K(-i\nabla) \mathrm{e}^{(-iH_0+\lambda P_\phi)t}\phi\Vert^2\mathrm{d}t\\
&\leq\Vert P_\Omega 1_K(-i\nabla)\Vert_{\mathcal{J}_2}^2
\\&\leq(2\pi)^{-d}|\Omega||K|.
\end{split}
\end{equation}
\end{proof}

\section{Semiclassical limit (proof of Theorem \ref{limitdist})}
\label{sectionsemiclass}
To obtain a semiclassical limit, it is important to observe that the distinction between the bosonic and fermionic character of the semigroups disappears in the limit $\epsilon\rightarrow0$.

\begin{lem}
\label{osc}
 For any $\phi,\psi\in\mathcal{H}$, $\Vert\phi\Vert=1$ one has for all $t,c\in\mathbb{R}$:
\begin{equation}
\lim_{\epsilon\rightarrow0}\Vert \exp({-iH_0{t}/{\epsilon}+cP_\phi t})\psi-\exp(-iH_0{t}/{\epsilon})\psi\Vert=0.
\end{equation}
\end{lem}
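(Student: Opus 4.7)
The plan is to treat $cP_\phi t$ as a bounded perturbation of the unitary group generated by $-iH_0/\epsilon$, use Duhamel's formula, and exploit dispersive decay of $\langle\phi, e^{-iH_0u}\psi\rangle$ to get the key integral to vanish.

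First I would fix $t>0$ (the case $t<0$ reduces to this by taking adjoints, or by replacing $(c,t)$ with $(-c,|t|)$) and apply the Kato-type bound from Section 2 of the paper to obtain the uniform bound $\|\exp(-iH_0s/\epsilon + cP_\phi s)\|\le e^{|c|s}$ for all $s\ge 0$. Setting $U_\epsilon(s)=\exp(-iH_0 s/\epsilon+cP_\phi s)$ and $U_0(s)=\exp(-iH_0 s/\epsilon)$, Duhamel's formula gives
\begin{equation}
U_\epsilon(t)\psi-U_0(t)\psi=c\int_0^t U_\epsilon(t-\sigma)\,P_\phi\, U_0(\sigma)\psi\,\mathrm{d}\sigma,
\end{equation}
and taking norms, since $P_\phi=|\phi\rangle\langle\phi|$ has rank one, yields
\begin{equation}
\|U_\epsilon(t)\psi-U_0(t)\psi\|\le |c|\,e^{|c|t}\int_0^t\bigl|\langle\phi,e^{-iH_0\sigma/\epsilon}\psi\rangle\bigr|\,\mathrm{d}\sigma.
\end{equation}

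Next, changing variables $u=\sigma/\epsilon$, I rewrite the integral as
\begin{equation}
\int_0^t\bigl|\langle\phi,e^{-iH_0\sigma/\epsilon}\psi\rangle\bigr|\,\mathrm{d}\sigma=\epsilon\int_0^{t/\epsilon}|g(u)|\,\mathrm{d}u,\quad g(u):=\langle\phi,e^{-iH_0u}\psi\rangle.
\end{equation}
Because $H_0=-\Delta/2$ on $L^2(\mathbb{R}^d)$ has purely absolutely continuous spectrum, the Riemann-Lebesgue lemma (applied to the spectral measure of $\psi$, as in Perry cited in the paper) gives $g(u)\to 0$ as $u\to\infty$; and $g$ is obviously bounded by $\|\phi\|\|\psi\|=\|\psi\|$.

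The final step is a Cesàro averaging argument: for any bounded function $g$ with $g(u)\to 0$, the time-average $\frac{1}{T}\int_0^T|g(u)|\,\mathrm{d}u\to 0$ as $T\to\infty$. Writing $\epsilon\int_0^{t/\epsilon}|g(u)|\,\mathrm{d}u=t\cdot\bigl(\tfrac{\epsilon}{t}\int_0^{t/\epsilon}|g(u)|\,\mathrm{d}u\bigr)$, we see that the bracket tends to zero as $\epsilon\to 0$, which yields the lemma.

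The only potentially delicate point is the Cesàro step, but it is elementary: given $\delta>0$, split $\int_0^T$ at $M$ where $|g(u)|<\delta$ for $u>M$, so $\tfrac{1}{T}\int_0^T|g|\le \tfrac{M\|g\|_\infty}{T}+\delta$, and the first term vanishes as $T\to\infty$. No essential obstacle remains; the construction is robust under replacing $c$ by $-c$ or $t$ by $-t$, so the lemma holds for all $t,c\in\mathbb{R}$.
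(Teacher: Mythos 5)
Your proof is correct, but it takes a genuinely different—and somewhat leaner—route than the paper. The crucial difference is the ordering in the Duhamel formula. The paper writes
\begin{equation}
U_\epsilon(t)\psi-U_0(t)\psi=c\int_0^t \bigl\langle\phi,U_\epsilon(s)\psi\bigr\rangle\, U_0(t-s)\phi\,\mathrm{d}s,
\end{equation}
i.e.\ it puts the \emph{perturbed} semigroup inside the scalar product; since that scalar product is not directly controlled by Riemann--Lebesgue, the paper must expand $\langle\phi,U_\epsilon(s)\psi\rangle$ by a second Duhamel step and then apply Riemann--Lebesgue plus dominated convergence twice (once for the inner, once for the outer integral). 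You instead use the opposite ordering, $U_\epsilon(t)\psi-U_0(t)\psi=c\int_0^t U_\epsilon(t-\sigma)P_\phi U_0(\sigma)\psi\,\mathrm{d}\sigma$, and absorb the perturbed factor into the uniform bound $e^{|c|t}$; the surviving scalar product $\langle\phi,e^{-iH_0\sigma/\epsilon}\psi\rangle$ involves only the free evolution and vanishes directly by Riemann--Lebesgue, so a single dominated-convergence (equivalently, your Cesàro) step finishes the argument. Both proofs rest on the same two inputs—the Kato bound $\Vert U_\epsilon(s)\Vert\le e^{|c|s}$ and the purely absolutely continuous spectrum of $H_0$—but your version needs only one iteration of Duhamel and, as a bonus, produces the explicit quantitative estimate $\Vert U_\epsilon(t)\psi-U_0(t)\psi\Vert\le|c|e^{|c|t}\int_0^t|\langle\phi,e^{-iH_0\sigma/\epsilon}\psi\rangle|\,\mathrm{d}\sigma$ rather than a purely qualitative limit. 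The only points worth making explicit in a final write-up are the standard density argument justifying the Duhamel identity for $\psi\notin\mathcal{D}(H_0)$ (both sides are continuous in $\psi$ uniformly for fixed $t,\epsilon$) and the trivial case $t=0$.
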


\begin{proof}
One can write this difference as
\begin{equation}
\label{firstintegral}
\begin{split}
 \exp({-iH_0{t}/{\epsilon}+cP_\phi t})&\psi-\exp(-iH_0{t}/{\epsilon})\psi\\&\hspace{-10mm}=c\int_0^t\left\langle \phi,\exp({-iH_0s/\epsilon{}+cP_\phi s})\psi\right\rangle \exp(-iH_0(t-s)/{\epsilon})\phi \mathrm{d}s.
\end{split}
\end{equation}
For the scalar product in the integral of (\ref{firstintegral}), one has
\begin{equation}
\label{secondintegral}
\begin{split}
 &\left\langle \phi,\exp({-iH_0s/\epsilon{}+cP_\phi s})\psi\right\rangle=\left\langle \phi,\exp({-iH_0s/\epsilon{}})\psi\right\rangle\\&\hspace{10mm}+c\int_0^s\left\langle \phi,\exp({-iH_0{r}/{\epsilon}+cP_\phi r})\psi\right\rangle \left\langle \phi,\exp(-iH_0({s-r})/{\epsilon})\phi\right\rangle \mathrm{d}r.
\end{split}
\end{equation}
The first term in (\ref{secondintegral}) tends to zero for all $s\neq0$ as $\epsilon\rightarrow0$ by the Riemann-Lebesgue Lemma. By the same argument, the integrand of the second term tends to zero almost everywhere, and it is bounded by ${e}^{\vert cs\vert}$. By dominated convergence this implies
\begin{equation}
 \lim_{\epsilon\rightarrow0}\left\langle \phi,\exp({-iH_0s/\epsilon{}+cP_\phi s})\psi\right\rangle=0.
\end{equation}
for all $s\neq0$. But dominated convergence also applies to (\ref{firstintegral}) and the assertion is proven.
\end{proof}

To show Theorem \ref{limitdist}, it is useful to have the inverse for the Wigner transform, which is given by the Weyl quantization. For a function $a\in L^2\left(\mathbb{R}^d_x\times\mathbb{R}^d_p \right)$ one can define an operator on $\mathcal{H}$ by
\begin{equation}
 \label{quantization}
(\operatorname{Op}[a]\psi)(x)=\frac{1}{(2\pi)^d}\int_{\mathbb{R}^{d}_p}\int_{\mathbb{R}^{d}_y}a(\frac{x+y}{2},p)\mathrm{e}^{i p\cdot(x-y)}\psi(y) \mathrm{d}y \hspace{0.5mm}\mathrm{d}p
\end{equation}
 Up to a factor, this is a unitary map from $L^2\left(\mathbb{R}^d\times\mathbb{R}^d \right)$ to the space $\mathcal{J}_2$ of Hilbert-Schmidt operators on $\mathcal{H}$ (cf. equation (3.3) in \cite{nier}):
\begin{equation}
 \label{L2J2}
\int_{\mathbb{R}^{2d}}\overline{a(x,p)}b(x,p)\mathrm{d}x \hspace{0.5mm}\mathrm{d}p=(2\pi)^d\operatorname{tr}\left(\operatorname{Op}[a]^* \operatorname{Op}[b]\right).
\end{equation}
With this definition, one has for all Hilbert-Schmidt operators $\kappa$,
\begin{equation}
 \operatorname{Op}\left[ W[\kappa]\right] =\frac{1}{(2\pi)^d}\kappa,
\end{equation}
where we identified operators and kernels. This is the main tool for the proof of Theorem \nolinebreak \ref{limitdist}.
\begin{proof}
  Let $\theta\in\mathcal{D}\left(\mathbb{R}^{d}_X\times\mathbb{R}^{d}_P\right)=C_0^\infty\left(\mathbb{R}^{d}_X\times\mathbb{R}^{d}_P\right)$ be a test function. We have to establish the existence of 
\begin{equation}
 \lim_{\epsilon\rightarrow0}\int_{\mathbb{R}^{2d}}f^\epsilon(X,P,T)\theta(X,P)\mathrm{d}X\hspace{0.5mm}\mathrm{d}P.
\end{equation}
By (\ref{L2J2}) one can transform this integral to
\begin{equation}
\begin{split}
 \int_{\mathbb{R}^{2d}}f^\epsilon(X,P,T)\theta(X,P)\mathrm{d}X \hspace{0.5mm}\mathrm{d}P&=\int_{\mathbb{R}^{2d}}\epsilon^{-d}W\left[ \rho^\epsilon\left(T/\epsilon\right)\right] \left({X}/\epsilon,P \right)\theta(X,P)\mathrm{d}X \hspace{0.5mm}\mathrm{d}P\\
&=\int_{\mathbb{R}^{2d}}W\left[ \rho^\epsilon\left(T/\epsilon\right)\right] \left(x,p \right)\theta(\epsilon x,p)\mathrm{d}x \hspace{0.5mm}\mathrm{d}p\\
&=\operatorname{tr}\left( \rho^\epsilon\left(T/\epsilon\right)   \operatorname{Op}[\theta^\epsilon]\right),
\end{split}
\end{equation}
where $\operatorname{Op}[\overline{a}]=\operatorname{Op}[a]^*$ is used, and $\theta^\epsilon(x,p)=\theta(\epsilon x, p)$.\par
By (\ref{formal}), the following expression holds,
\begin{equation}
\label{bosonsol}
 \begin{split}
  \rho^\epsilon\left( T/\epsilon{}\right)= \sgn(c)&\left( \exp({-iH_0T/\epsilon{}+cP_\phi T}) \exp({iH_0T/\epsilon{}+cP_\phi T})-\mathbf{1}\right) \\&+\exp({-iH_0T/\epsilon{}+cP_\phi T})\rho^\epsilon_0 \exp({iH_0T/\epsilon{}+cP_\phi T}).\hspace{4mm}
\end{split}
\end{equation}
To keep notation simple, we first assume $\rho^\epsilon_0=0$.\\
\indent{}The estimate $\Vert\left( \vert \psi\left\rangle \right\langle \psi\vert - \vert \chi\left\rangle \right\langle \chi\vert\right)\Vert_{\operatorname{tr}}\leq\Vert\psi-\chi\Vert\cdot\left( \Vert\psi\Vert+\Vert\chi\Vert\right) $ yields, together with Lemma \ref{osc} and the dominated convergence theorem,
\begin{equation}
\label{J1conv}
 \begin{split}
\Vert\rho^\epsilon\left( T/\epsilon{}\right)&-2|c|\int_0^T\mathrm{e}^{-iH_0S/\epsilon{}} P_\phi \mathrm{e}^{iH_0S/\epsilon{}}\mathrm{d}S
\Vert_{\operatorname{tr}}\\&
\leq2|c|\int_0^T\|\mathrm{e}^{-iH_0S/\epsilon{}+cP_\phi S}P_\phi \mathrm{e}^{iH_0S/\epsilon{}+cP_\phi S}-\mathrm{e}^{-iH_0S/\epsilon{}} P_\phi \mathrm{e}^{+iH_0S/\epsilon{}}\|_{\operatorname{tr}}\mathrm{d}S\\
&\leq2|c|\left( {e}^{|cT|}+1\right) \int_0^T\Vert \mathrm{e}^{-iH_0S/\epsilon{}+cP_\phi S}\phi-\mathrm{e}^{-iH_0S/\epsilon{}}\phi\Vert \mathrm{d}S\rightarrow0\hspace{4mm}(\epsilon\rightarrow0)
 \end{split}
\end{equation}
Furthermore, by the theorem of Calder\'{o}n-Vaillancourt (Theorem 2.8.1 in \cite{martinez}), the operators $\operatorname{Op}[\theta^\epsilon]$ are uniformly bounded in $\mathcal{B}(\mathcal{H})$ as $\epsilon$ tends to zero, and thus, by the linearity and cyclicity of the trace,
\begin{equation}
 \lim_{\epsilon\rightarrow0}\left|\operatorname{tr}\left(\rho^\epsilon\left( T/\epsilon{}\right)  \operatorname{Op}[\theta^\epsilon]\right)-2|c|\int_0^T\operatorname{tr}\left(\mathrm{e}^{iH_0S/\epsilon{}}\operatorname{Op}[\theta^\epsilon]\mathrm{e}^{-iH_0S/\epsilon{}} P_\phi \right) \mathrm{d}S\right|=0.
\end{equation}
The Heisenberg time evolution applied to $\operatorname{Op}[\theta^\epsilon]$ can be carried over to phase space by defining 
\begin{equation}
\label{defeta}
 \eta^{\epsilon}_S(x,p):=\theta^\epsilon\left( x+pS/\epsilon{},p\right)=\theta(\epsilon x+pS,p)
\end{equation}
so that (cf. \cite{nier})
\begin{equation}
\label{heisenberg}
 \mathrm{e}^{iH_0S/\epsilon{}}\operatorname{Op}[\theta^\epsilon]\mathrm{e}^{-iH_0S/\epsilon{}}=\operatorname{Op}[\eta^{\epsilon}_S].
\end{equation}
Now Lemma 3.2 from \cite{nier} applies to $\eta^{\epsilon}_S$, stating that the operators ${\operatorname{Op}[\eta^{\epsilon}_S]}_{\epsilon\geq0}$ are uniformly bounded in $\mathcal{B}(\mathcal{H})$ and converge strongly to ${\operatorname{Op}[\eta^{0}_S]}$ as $\epsilon$ tends to zero. Thus also $\operatorname{tr}\left(\operatorname{Op}[\eta^\epsilon_S]P_\phi \right)$ converges and, by dominated convergence, the desired limit exists,
\begin{equation}
\label{heureka}
\begin{split}
 \lim_{\epsilon\rightarrow0}\int_{\mathbb{R}^{2d}}f^\epsilon(X,P,T)\theta(X,P)\mathrm{d}X\hspace{0.5mm}\mathrm{d}P&=\lim_{\epsilon\rightarrow0}2|c|\int_0^T\operatorname{tr}\left(\operatorname{Op}[\eta^\epsilon_S] P_\phi \right)\mathrm{d}S\\
&=2|c|\int_0^T\operatorname{tr}\left(\operatorname{Op}[\eta^0_S] P_\phi \right)\mathrm{d}S.
\end{split}
\end{equation}
By (\ref{defeta}), $\eta^0_S$ does not depend on $X$. $\operatorname{Op}[\eta^0_S]$ is a multiplication operator in momentum space and thus
\begin{equation}
\label{value0}
\begin{split}
 2|c|\int_0^T\operatorname{tr}\left(\operatorname{Op}[\eta^0_S] P_\phi \right) \mathrm{d}S&=2|c|\int_0^T\int_{\mathbb{R}^d}\theta(PS,P)\vert\hat\phi(P)\vert^2\hspace{0.5mm}\mathrm{d}S.\\
\end{split}
\end{equation}
At this point of the proof, one should remark that $\eta^0_S$ is no longer in $L^2\left(\mathbb{R}^d\times\mathbb{R}^d \right)$ so $\operatorname{Op}[\eta^0_s]$ is not given by our definition of the Weyl quantization (\ref{quantization}). But it can be obtained as the quantization of a $C^\infty\left(\mathbb{R}^d\times\mathbb{R}^d\right)$ symbol with all derivatives bounded. This construction is described, for example, in Chapter 2 of \cite{martinez}.\par
Next, we allow a general initial condition, i.e.\ $\rho^\epsilon_0$ is an arbitrary trace-norm bounded sequence of density matrices such that (\ref{indist}) holds for some distribution $g$ on macroscopic phase space. Concentrating on the last term in (\ref{bosonsol}), one has to evaluate
\begin{equation}
 \operatorname{tr}\left(\mathrm{e}^{-iH_0T/\epsilon{}+cP_\phi T}\rho^\epsilon_0 \mathrm{e}^{iH_0T/\epsilon{}+cP_\phi T}\operatorname{Op}[\theta^\epsilon]\right).
\end{equation}
As before, one first has to check that the action of the semigroups can be substituted by the free evolution,
\begin{equation}
 \begin{split}
  &\Vert \mathrm{e}^{-iH_0T/\epsilon{}+cP_\phi T}\rho^\epsilon_0 \mathrm{e}^{iH_0T/\epsilon{}+cP_\phi T}\operatorname{Op}[\theta^\epsilon]-\mathrm{e}^{-iH_0T/\epsilon{}}\rho^\epsilon_0 \mathrm{e}^{iH_0T/\epsilon{}}\operatorname{Op}[\theta^\epsilon]\Vert_{\operatorname{tr}}\\
&\leq \Vert \mathrm{e}^{iH_0T/\epsilon{}}\mathrm{e}^{-iH_0T/\epsilon{}+cP_\phi T}\rho^\epsilon_0 \mathrm{e}^{iH_0T/\epsilon{}+cP_\phi T}\mathrm{e}^{-iH_0T/\epsilon{}}-\rho^\epsilon_0\Vert_{\operatorname{tr}}\cdot\Vert \operatorname{Op}[\theta^\epsilon]\Vert_{\mathcal{B}\left( \mathcal{H}\right) }\\
&\leq 2|c| {e}^{|cT|}\int_0^T\Vert P_\phi \mathrm{e}^{-iH_0S/\epsilon{}+cP_\phi S}\rho^\epsilon_0 \Vert_{\operatorname{tr}}\mathrm{d}S   \cdot\Vert\operatorname{Op}[\theta^\epsilon]\Vert_{\mathcal{B}\left( \mathcal{H}\right) }.
 \end{split}
\end{equation}
The operator norm in the last line is bounded uniformly by the Calder\'{o}n - Vaillancourt theorem. For the integral, Lemma \ref{osc} and the uniform boundedness of $\|\rho_0^\epsilon\|_{\operatorname{tr}}$ allow an application of dominated convergence, 
\begin{equation}
\begin{split}
 \limsup_{\epsilon\rightarrow0}\int_0^T\Vert P_\phi \mathrm{e}^{-iH_0S/\epsilon{}+cP_\phi S}\rho^\epsilon_0 \Vert_{\operatorname{tr}}\hspace{0.5mm}\mathrm{d}S&=\limsup_{\epsilon\rightarrow0}\int_0^T\Vert P_\phi \mathrm{e}^{-iH_0S/\epsilon{}}\rho^\epsilon_0 \Vert_{\operatorname{tr}}\hspace{0.5mm}\mathrm{d}S\\
&=\limsup_{\epsilon\rightarrow0}\int_0^T\Vert \rho_0^\epsilon \mathrm{e}^{iH_0S/\epsilon{}}\phi\Vert \hspace{0.5mm}\mathrm{d}S=0.
\end{split}
\end{equation}
The last equality is seen as follows: Since the spectrum of $H_0$ is absolutely continuous, we first can consider a state $\phi$ with bounded spectral density, i.e. 
\begin{equation}
 \frac{d\mu_{\phi}(E)}{dE}\leq C^2
\end{equation}
for some $C>0$. Then for any $\psi\in\mathcal{H}$, $g(t)=\left\langle\psi, \mathrm{e}^{iH_0t}\phi \right\rangle$ is the Fourier transform of an $L^2(\mathbb{R})$ function, satisfying
\begin{equation}
 \Vert g\Vert_{L^2}\leq\sqrt{2\pi}C\Vert\psi\Vert.
\end{equation}
Therefore, the Cauchy-Schwarz inequality yields
\begin{equation}
\label{estimate}
 \int_0^T\big|\left\langle\psi, \mathrm{e}^{iH_0S/\epsilon}\phi \right\rangle\big|\hspace{0.5mm}\mathrm{d}S\leq\sqrt{2\pi T}C\Vert\psi\Vert\sqrt{\epsilon}.
\end{equation}
Now we can write 
\begin{equation}
 \rho^\epsilon_0=\sum_{n\in\mathbb{N}}a^\epsilon_n|\psi^\epsilon_n\left\rangle \right\langle \psi^\epsilon_n|
\end{equation}
with $\left( \psi^\epsilon_n\right) $ an $\epsilon$-dependent orthonormal basis and $\sum_{n\in\mathbb{N}}|a^\epsilon_n|\leq K$ uniformly in $\epsilon$. Then (\ref{estimate}) implies
\begin{equation}
\begin{split}
 \int_0^T\Vert \rho_0^\epsilon \mathrm{e}^{iH_0S/\epsilon{}}\phi\Vert \hspace{0.5mm}\mathrm{d}S&\leq\sum_{n\in\mathbb{N}}|a^\epsilon_n|\int_0^T\big|\left\langle\psi^\epsilon_n, \mathrm{e}^{iH_0S/\epsilon}\phi \right\rangle\big|\hspace{0.5mm}\mathrm{d}S\\
&\leq K\sqrt{2\pi T}C\sqrt{\epsilon}\rightarrow0\hspace{0.5cm}(\epsilon\rightarrow0).
\end{split}
\end{equation}
Approximating general $\phi$ with states $\phi_C$ of bounded spectral density, one can use the uniform boundedness of the operators $\rho_0^\epsilon \mathrm{e}^{iH_0S/\epsilon{}}$ to show
\begin{equation}
 \limsup_{\epsilon\rightarrow0}\int_0^T\Vert \rho_0^\epsilon \mathrm{e}^{iH_0S/\epsilon{}}\phi\Vert \hspace{0.5mm}\mathrm{d}S\leq TK\Vert\phi-\phi_C\Vert,
\end{equation}
where the right hand side can be made arbitrarily small. Thus one can in fact replace the semigroups by the unitary group, and then apply (\ref{L2J2}),
\begin{equation}
\label{valueinit}
\begin{split}
 \lim_{\epsilon\rightarrow0}&\operatorname{tr}\left(\mathrm{e}^{-iH_0T/\epsilon{}+cP_\phi T}\rho^\epsilon_0 \mathrm{e}^{iH_0T/\epsilon{}+cP_\phi T}\operatorname{Op}[\theta^\epsilon]\right)\\
&=\lim_{\epsilon\rightarrow0}\operatorname{tr}\left(\mathrm{e}^{-iH_0T/\epsilon{}}\rho^\epsilon_0 \mathrm{e}^{iH_0T/\epsilon{}}\operatorname{Op}[\theta^\epsilon]\right)\\
&=\lim_{\epsilon\rightarrow0}\operatorname{tr}\left(\rho^\epsilon_0 \operatorname{Op}[\eta^\epsilon_T]\right)\\
&=\lim_{\epsilon\rightarrow0}\int_{\mathbb{R}^{2d}}W\left[ \rho^\epsilon_0\right] \left(x,p \right)\theta(\epsilon x+pT, p)\mathrm{d}x\hspace{0.5mm}\mathrm{d}p\\
&=\lim_{\epsilon\rightarrow0}\int_{\mathbb{R}^{2d}}\epsilon^{-d}W\left[ \rho^\epsilon_0\right] \left(X/\epsilon,P \right)\theta(X+PT,P)\mathrm{d}X\hspace{0.5mm}\mathrm{d}P\\
&=\int_{\mathbb{R}^{2d}}g(X,P)\theta(X+PT, P)\mathrm{d}X\hspace{0.5mm}\mathrm{d}P\\
&=\int_{\mathbb{R}^{2d}}g(X-PT,P)\theta(X, P)\mathrm{d}X\hspace{0.5mm}\mathrm{d}P.
\end{split}
\end{equation}
Adding (\ref{value0}) and (\ref{valueinit}), one obtains the limit for general initial conditions,
\begin{equation}
\label{value}
\begin{split}
 \lim_{\epsilon\rightarrow0}&\int_{\mathbb{R}^{2d}}f^\epsilon(X,P,T)\theta(X,P)\mathrm{d}X\hspace{0.5mm}\mathrm{d}P\\&=\int_{\mathbb{R}^{2d}}g(X-PT,P)\theta(X, P)\mathrm{d}X\hspace{0.5mm}\mathrm{d}P\\&\hspace{2cm}+2|c|\int_0^T\int_{\mathbb{R}^d}\theta(PS,P)\vert\hat\phi(P)\vert^2\mathrm{d}P\hspace{0.5mm}\mathrm{d}S\\
&=\left\langle f^0(T),\theta\right\rangle_{\mathcal{D}', \mathcal{D}}
\end{split}
\end{equation}
The distribution in the last line is given by 
\begin{equation}
f^0(X,P,T)=g(X-PT,P)+2|c|\int_0^T\delta(X-PS)\vert\hat\phi(P)\vert^2\mathrm{d}S.
\end{equation}
\noindent{}It remains to show (\ref{limitdiff}). By (\ref{indist}), the initial value is
\begin{equation}
 f^0(X,P,0)=\lim_{\epsilon\rightarrow0}\epsilon^{-d}W\left[ \rho^\epsilon_0\right] \left(X/\epsilon,P \right)= g(X,P).
\end{equation}
By testing with an arbitrary $\theta\in\mathcal{D}\left(\mathbb{R}^{d}_X\times\mathbb{R}^{d}_P\right)$, one can check that $f^0$ also solves the differential equation,
\begin{equation}
 \begin{split}
  &\frac{d}{dT}\left\langle f^0(T),\theta\right\rangle_{\mathcal{D}', \mathcal{D}}=\\
&=\frac{d}{dT}\left(\int_{\mathbb{R}^{2d}}g(X,P)\theta(X+PT, P)\mathrm{d}X\hspace{0.5mm}\mathrm{d}P\right)\\&\hspace{2cm}+\frac{d}{dT}\left(2|c|\int_0^T\int_{\mathbb{R}^d}\theta(P(T-S),P)\vert\hat\phi(P)\vert^2\mathrm{d}P\hspace{0.5mm}\mathrm{d}S \right)\\ 
&=\int_{\mathbb{R}^{2d}}g(X,P)P\cdot\nabla_X\theta(X+PT, P)\mathrm{d}X\hspace{0.5mm}\mathrm{d}P\\&\hspace{1cm}+2|c|\int_0^T\int_{\mathbb{R}^d}P\cdot\nabla_X\theta(P(T-S),P)\vert\hat\phi(P)\vert^2\mathrm{d}P\hspace{0.5mm}\mathrm{d}S\\&\hspace{2cm}+2|c|\int_{\mathbb{R}^d}\theta(0,P)\vert\hat\phi(P)\vert^2dP\\
&=\left\langle f^0(T),P\cdot\nabla_X\theta\right\rangle_{\mathcal{D}', \mathcal{D}}+\left\langle 2|c|\delta(X)\vert\hat\phi(P)\vert^2,\theta\right\rangle_{\mathcal{D}', \mathcal{D}}\\
&=\left\langle -P\cdot\nabla_Xf^0(T)+2|c|\delta(X)\vert\hat\phi(P)\vert^2,\theta\right\rangle_{\mathcal{D}', \mathcal{D}}.
 \end{split}
\end{equation}
\end{proof}

\vspace{5mm}
\textit{Acknowledgements.} We would like to thank V. S. Buslaev and A. Komech for instructive and encouraging discussions. 

M. Butz acknowledges support from the ENB graduate program TopMath and a grant sponsored by Max Weber-Programm Bayern.

\end{document}